\date{\vspace{-5ex}}
\newcommand{\tr}{{{\mathsf T}}}
\DeclareMathOperator*{\minimize}{minimize}
\DeclareMathOperator*{\st}{subject~to}
\mathchardef\Re="023C
\mathchardef\Im="023D
\newtheorem{theorem}{Theorem}
\newtheorem{lemma}{Lemma}
\newtheorem{proposition}{Proposition}
\newtheorem{definition}{Definition}
\newtheorem{proof}{Proof}
\newtheorem{example}{Example}
\begin{document}

\title{  \LARGE \bf  On Separable Quadratic Lyapunov Functions for Convex Design of Distributed Controllers %Characterizing All  Distributed Controllers  Admitting a Separable, Quadratic Lyapunov Function
%\thanks{Identify applicable funding agency here. If none, delete this.}
%\thanks{This research was gratefully funded by the European Union ERC Starting Grant CONENE. $^{~}$The authors are with the Automatic Control Laboratory, D-ITET, ETH Zurich, Switzerland. e-mails: {\tt\footnotesize \{furieril, mkamgar\}@control.ee.ethz.ch}}
}

	 \author{Luca Furieri\thanks{This research was gratefully funded by the European Union ERC Starting Grant CONENE. Luca Furieri and Maryam Kamgarpour are with the Automatic Control Laboratory, Department of Information Technology and Electrical Engineering, ETH Z\"{u}rich, Switzerland. E-mails: {\tt\footnotesize \{furieril, mkamgar\}@control.ee.ethz.ch}} \and Yang Zheng\thanks{Yang Zheng and Antonis Papachristodoulou are with the Department of Engineering Science, University of Oxford, United Kingdom. E-mails: {\tt\footnotesize \{yang.zheng, antonis\}@eng.ox.ac.uk}} \and Antonis Papachristodoulou$^\dagger$ \and Maryam Kamgarpour$^\ast$% <-this % stops a space % <-this % stops a space
	%
	%\thanks{Manuscript received April 19, YYYY; revised August 26, YYYY.}
}
\maketitle

\begin{abstract}                          % Abstract of not more than 200 words.
We consider the problem of designing a stabilizing and optimal static controller with a pre-specified sparsity pattern. Since this problem is NP-hard in general, it is necessary to resort to approximation approaches. In this paper, we characterize a class of convex restrictions of this problem that are based on designing a separable quadratic Lyapunov function for the closed-loop system. This approach generalizes previous results based on optimizing over diagonal Lyapunov functions, thus allowing for improved feasibility and performance. Moreover, we suggest a simple procedure to compute favourable structures for the Lyapunov function yielding high-performance distributed controllers. Numerical examples validate our results.
%The convex design of optimal static controllers for linear systems is commonly performed by exploiting quadratic Lyapunov functions. When we consider a distributed control problems, however, the static controller is commonly required to possess a certain sparsity pattern.  It is challenging, in general, to determine how these sparsity requirements could be translated into tractable constraints for convex design of distributed controllers. A commonly used approach in the literature is to force the Lyapunov function to be fully separable by using a diagonal requirement on the corresponding matrix. However, such a restriction can be very conservative and raises a question as to if more performing separability requirements on the Lyapunov function could  preserve sparsity of the controller. In this paper we aim at fully answering the above question. In particlar, we identify \emph{all} the feasible convex programs based on separable Lyapunov functions which are guaranteed to yield a distributed controller. We show the connection between minimizing separability of the Lyapunov function and maximizing performance, thus characterizing the improvement of the proposed method over past approaches. This improvement in performance and these interpretations in terms of Lyapunov functions are validated through numerical examples.
\end{abstract}

\section{Introduction}
 Recent years have witnessed a growth in the sensing and actuation capabilities of control systems. These technological advances  enable addressing a wide range of  engineering applications, such as the smart grid \cite{dorfler2014sparsity}, biological networks \cite{prescott2014layered}, and automated highways \cite{zheng2017distributed}. These applications commonly rely on efficiently coordinating
the decision making of multiple interacting agents, which only have partial information about the internal variables of the overall system. The lack of full information often presents itself as structural constraints on the controllers' parameters and motivates the field of distributed control.

 It is well-known that synthesizing optimal controllers under structural constraints is a challenging task and amounts to an NP-hard problem in general \cite{tsitsiklis1984complexity,Witsenhausen}. A line of work has focused on  identifying particular interactions between the structural constraints and the system dynamics for which dynamic linear controllers are optimal and can be found in a tractable way \cite{ho1972team,bamieh2005convex}. These concepts were generalized under the notion of quadratic invariance (QI)  by using the Youla-Kucera parametrization \cite{rotkowitz2006characterization}. When QI does not hold,  convex approximations of the otherwise intractable problem were introduced in \cite{rotkowitz2012nearest} and generalized in \cite{furieri2018robust}.
  Computing optimal dynamic controllers is challenging in general, as it involves approximating infinite dimensional programs\cite{Alavian}. Hence, significant work has focused on the synthesis of static feedback controllers. Thanks to their simplicity, structured static controllers are also commonly employed in the field of distributed model predictive control (DMPC) in order to generate robustly invariant control policies with partial information \cite{conte2012distributed,darivianakis2018high}. However, computing optimal and stabilizing static controllers is also intractable with general structural constraints. %Furthermore, a suffiient condition for existenceonly necessary condition is that there are no unstable modes which are fixed with respect to $\text{Sparse}(S)$ \cite{alavian2014stabilizing}.

The work in \cite{lin2013design} has considered a nonlinear optimization technique to find locally optimal distributed static controllers. Rather than directly tackling this intractable problem, the authors in \cite{SDP} have developed convex relaxations based on dropping rank constraints and provided optimality bounds in terms of low-rank solutions. However, these relaxations might fail to recover a distributed controller that is stabilizing. Polynomial optimization has been used  in \cite{wang2018convex} to obtain a sequence of convex relaxations which converges to a stabilizing distributed controller. Nevertheless, performance of the recovered solution is not directly addressed in \cite{wang2018convex}. A convex surrogate based on approximating the non-convex cost function with matrix norms was proposed in \cite{dvijotham2015convex}. However, the suboptimality bounds can be loose.

% Also, it can be readily solved in polynomial time with standard optimization techniques and off-the-shelf software.
A different approach is to consider a \emph{convex restriction}, where the unstructured problem is reformulated with an equivalent convex program involving additional convex constraints to guarantee the structure of controllers. The advantage of a convex restriction is that its optimal solution can be readily computed by using standard convex optimization techniques. Furthermore, all its feasible solutions are structured and stabilizing by design. A disadvantage is that a restriction may be infeasible even when the original problem is feasible. Following this approach,  \cite{geromel1994decentralized, rubio2013static} proposed preserving the sparsity of the distributed controller by imposing a diagonal structure on a matrix variable. When the overall system is divided into interconnected subsystems with local information, the works \cite{conte2012distributed,zheng2017convex, han2017hierarchical} suggested forcing the matrix defining a quadratic Lyapunov function for the closed-loop system to be block-diagonal, where the blocks have predetermined dimensions. The authors in \cite{polyak2013lmi} proposed a different technique that is only applicable when the desired controller structure is either row or column sparse. In this paper, we improve upon these conservative convex restriction approaches up to their theoretical boundaries and we extend their applicability to general systems and information structures.%feasibility and performance of these extend these convex restriction approaches to general systems and information structures and identify their theoretical boundaries.}

Our  contributions are as follows. First, in Section~\ref{se:allfeasibleconvex} we introduce the notion of \emph{sparsity invariance} to characterize a novel class of convex restrictions that is based on imposing appropriate sparsity patterns on certain matrix factors. As a result, we generalize previous approaches \cite{conte2012distributed, han2017hierarchical,geromel1994decentralized, zheng2017convex, rubio2013static} and we improve their feasibility and performance. Moreover, we show that convex restriction approaches based on sparsity invariance cannot be generalized further.  Second, we provide necessary and sufficient conditions for the feasibility of these convex restrictions, in terms of the existence  of a corresponding separable Lyapunov function for the closed-loop system (Section~\ref{se:allfeasibleconvex}). Third, we suggest a computationally tractable procedure to design favourable structures for the Lyapunov matrix to achieve high performance in Section~\ref{se:optimized}. This procedure highlights that increasing the performance is closely linked to loosening the degree of separability we force on the Lyapunov function. We validate our results through numerical examples both in Section~\ref{se:allfeasibleconvex} and Section~\ref{se:numerics}.

\section{Preliminaries}
\label{se:preliminaries}
In this section, we first introduce some notation on sparsity structures, and then present the problem statement of  distributed optimal control. We highlight its non-tractability and introduce the class of convex restrictions under investigation.
\subsection{Notation and sparsity structures}	
 We use $\mathbb{R}$, $\mathbb{C}$ and $\mathbb{N}$ to  denote the sets of real numbers, complex numbers and  positive integers, respectively. For any $n \in \mathbb{N}$, we use $\mathbb{N}_{[1,n]}\subseteq \mathbb{N}$ to denote the set of integers from $1$ to $n$.  The $(i,j)$-th element in a matrix $Y \in \mathbb{R}^{m \times n}$ is referred to as $Y_{ij}$. We use $I_n$ to denote the identity matrix of size $n \times n$,  $0_{m \times n}$ to denote the zero matrix of size $m \times n$  and $1_{m\times n}$ to denote the matrix of all ones of size $m \times n$. The vector $e_i \in \mathbb{R}^n$ denotes the $i$-th vector of the standard basis of $\mathbb{R}^n$, having a $1$ in its $i$-th position and $0$ everywhere else. For a square matrix $M \in \mathbb{R}^{n \times n}$ we write $M=\text{blkdiag}(M_1,\cdots, M_r) \in \mathbb{R}^{n \times n}$ if $M$ is block-diagonal with $M_i \in \mathbb{R}^{n_i \times n_i}$ on its diagonal entries, and we have $\sum_{i=1}^r n_i=n$.  The notation $M\succ 0$ (resp. $M \succeq 0$) refers to $M$ being symmetric and positive definite (resp. symmetric and positive semidefinite).

The sparsity structure of a matrix can be conveniently represented by a binary matrix. A binary matrix is a matrix with entries from the set $\{0,1\}$, and we use $\{0,1\}^{m \times n}$ to denote the set of $m \times n$ binary matrices.
%We denote the set of $n \times n$ symmetric binary matrices as $\mathbb{S}_n^{B}$
Given a binary matrix $X \in \{0,1\}^{m \times n}$, we define the sparsity subspace $\text{Sparse}(X) \subseteq \mathbb{R}^{m \times n}$ as
\begin{equation*}
\text{Sparse}(X) := \{Y \in \mathbb{R}^{m \times n}\mid Y_{ij} = 0 \; \text{if}\; X_{ij}=0, \forall i,j\}.
\end{equation*}
Similarly, given $Y\in \mathbb{R}^{m \times n}$, we define a binary matrix $X\, := \, \text{Struct}(Y)$ encoding the sparsity pattern of $Y$ as
$$
    X_{ij}=\begin{cases}
	   1 & \text{if}\;Y_{ij} \neq 0\,,\\
	   0 & \text{otherwise}\,.
	\end{cases}
$$
%For example, given  a binary matrix $X \in \{0,1\}^{2 \times 3}$ and a matrix $Y \in \mathbb{R}^{2 \times 3}$ as follows
%\begin{equation} \label{E:BinaryExample}
%    X = \begin{bmatrix}
%        0 & 1 & 0 \\ 1 & 1 & 1
%    \end{bmatrix}\,, \quad Y = \begin{bmatrix}
%        0 & 2 & 0 \\ 3 & 4 & 2
%    \end{bmatrix},
%\end{equation}
%we have $Y \in \text{Sparse}(X)$, and $X = \text{Struct}(Y)$.

 Let $X, \hat{X} \in \{0,1\}^{m \times n}$ and $Z \in \{0,1\}^{n \times p}$ %, and $Z \in \{0,1\}^{c \times c}$
be binary matrices. Throughout the paper, we adopt the following conventions:
    \begin{equation*}
        X + \hat{X} := \text{Struct}(X + \hat{X}), ~~
        XZ:=\text{Struct}(XZ)\,.
       % Z^r&:=\text{Struct}(Z^r).  % this definition is redundent since it follows from the second equation above.
    \end{equation*}
We state that $X \leq \hat{X}$ if and only if $X_{ij}\leq \hat{X}_{ij}\;\forall i,j$, and $X < \hat{X}$ if and only if $X \leq \hat{X}$ and there exist indices $i,j$ such that $X_{ij}<\hat{X}_{ij}$. {Also, we denote} $X \nleq \hat{X}$ if and only if there exist indices $i,j$ such that $X_{ij}>\hat{X}_{ij}$.
%	
%	 Given $X \in \{0,1\}^{m \times n}$ we denote its cardinality, that is, the total number of nonzero entries, as
%	 \begin{equation*}
%	  \|X\|_0=\sum_{i=1}^m \sum_{j=1}^n X_{ij}. % \in \mathbb{N}\,,
%	 \end{equation*}
%Considering the following binary matrices
%$$
%    {X}_1 = \begin{bmatrix}
%        0 & 1 & 0 \\ 1 & 0 & 1
%    \end{bmatrix}\,, \quad {X}_2 = \begin{bmatrix}
%        1 & 1 & 0 \\ 1 & 0 & 1
%    \end{bmatrix},
%$$
%and the binary matrix $X$ in~\eqref{E:BinaryExample}, we have ${X}_1 < X, {X}_2 \nleq X$ and ${X}_1 + X = X$. The cardinality of these matrices are $\|X\|_0 = 4, \|X_1\|_0 = 3$ and $ \|X_2\|_0 = 4$, respectively.

A permutation matrix $\Pi \in \{0,1\}^{n \times n}$ is a binary matrix that has exactly one entry with $1$ in each row and each column and $0$ everywhere else.

An undirected graph $\mathcal{G}=(\mathcal{V},\mathcal{E})$ is defined by a set of nodes $\mathcal{V}$ and a set of edges $\mathcal{E} \subseteq \mathcal{V} \times \mathcal{V}$, where $(i,j) \in \mathcal{E} \Leftrightarrow (j,i) \in \mathcal{E}$. Given a symmetric binary matrix $X \in \{0,1\}^{n \times n}$, we denote the undirected graph having $X$ as its adjacency matrix as $\mathcal{G}(X)$. Then, any binary matrix $X=X^\mathsf{T} \in \{0,1\}^{n \times n}$ corresponds to an undirected graph $\mathcal{G}(X)$, and vice-versa.  The transitive closure of a graph $\mathcal{G}=(\mathcal{V},\mathcal{E})$  is defined as the graph $\tilde{\mathcal{G}}=(\mathcal{V},\tilde{\mathcal{E}})$ where there is an edge between $\nu_i \in \mathcal{V}$ and $\nu_j \in \mathcal{V}$ in $\tilde{\mathcal{E}}$ if and only if there is a  path in $\mathcal{G}$ between $\nu_i$ and $\nu_j$. A connected component of a graph $\mathcal{G}(\mathcal{V},\mathcal{E})$ is a subgraph $\mathcal{G}'(\mathcal{V}',\mathcal{E}')$ with $\mathcal{V}'\subseteq \mathcal{V}$ and $\mathcal{E}' \subseteq \mathcal{E}$ in which any two vertices of $\mathcal{G}'(\mathcal{V}',\mathcal{E}')$ are connected to each other by a path, and  such that for all $\nu' \in \mathcal{V}'$ and $\nu \in \mathcal{V}\setminus \mathcal{V}'$  there is no edge in $\mathcal{E}$  connecting them. For any symmetric binary matrix $X\geq I_n$, $\mathcal{G}(X^{n-1})$ is the transitive closure of $\mathcal{G}(X)$. Since $\mathcal{G}(X)$ is undirected, its transitive closure $\mathcal{G}(X^{n-1})$ is a graph that consists of complete subgraphs (also denoted as cliques) corresponding to the connected components of $\mathcal{G}(X)$ \cite{biggs1993algebraic}.
	%In this section we present a range of distributed control problems within different mathematical frameworks. Our goal is to identify common difficulties and features in designing distributed feedback controllers through convex programming.
%	\subsection{Distributed static controllers for closed-loop transfer function $\mathcal{H}_2$ norm minimization}

\subsection{Problem statement}
We consider a linear dynamical system
\begin{equation}\label{eq:DynamicsGlobal}
    \dot{x}(t) = Ax(t) +Bu(t) + Hw(t),
\end{equation}
where $x(t) \in \mathbb{R}^{n}$, $u(t)\in \mathbb{R}^{m}$ and $w(t) \in \mathbb{R}^{q}$ denote the state, control input, and disturbance vectors at time $t$, respectively. We look for a linear static state feedback controller
\begin{equation}\label{eq:ControllerGlobal}
    u(t) = Kx(t),  \quad K \in \mathcal{S},
\end{equation}
where $\mathcal{S}:=\text{Sparse}(S)$ denotes the  subspace of matrices $K$ having the sparsity pattern specified by $S \in \{0,1\}^{m \times n}$. Sparsity requirements on $K$ are common in distributed control.  Indeed, by choosing the binary matrix $S$ such that $S_{ij}=0$, we can encode the requirement that the $i$-th  control input cannot be a function of the $j$-th  state variable.  In other words, by choosing the subspace $\mathcal{S}$ appropriately, we can encode information constraints in our control problem.   The closed-loop system is then
\begin{equation}\label{eq:Closedloop}
    \dot{x}(t) = (A+BK)x(t)+Hw(t), \quad K \in \mathcal{S}.
\end{equation}
In this paper, we address the problem of computing the stabilizing static linear feedback policy $u(t)=Kx(t)$ which minimizes a specified norm  of the closed-loop transfer function from disturbances $w(t)$ to a performance signal defined as follows
\begin{equation}
\label{eq:performance_signal}
z(t)=Cx(t)+Du(t)\,,
\end{equation}
where $C \in \mathbb{R}^{p \times n}$ and $D \in \mathbb{R}^{p \times m}$. The corresponding distributed control problem can be written as follows:
	\begin{empheq}[box=\fbox]{alignat=3}
	&\textbf{Problem }&&\mathcal{P}_K \nonumber\\
	 &\minimize_{K } ~~&& ||(C+DK)(sI-A-BK)^{-1}H|| \label{eq:H2norm}\\
	& \st &&(A+BK) \text{ is Hurwitz}\,, \quad  K \in \mathcal{S}\,, \nonumber
	\end{empheq}	
%	To keep the discussion simple, we  only consider continuous-time systems with the goal of minimizing the $\mathcal{H}_2$ norm. However, we remark that the results of this paper can be easily extended to discrete-time systems and for other norms of interest such as the $\mathcal{H}_\infty$ norm.
	where $s\in \mathbb{C}$  and usual choices for the norm $||\cdot||$ are the $\mathcal{H}_2$ and the $\mathcal{H}_\infty$ functionals. 	The constraint that $(A+BK)$ is Hurwitz ensures that the cost is finite. Necessary conditions for feasibility of problem $\mathcal{P}_K$ are that the pair $(A,B)$ is stabilizable and that there are no  distributed fixed modes with respect to $\mathcal{S}$ \cite{alavian2014stabilizing}. Sufficient conditions for distributed stabilizability using static feedback are not known for general systems. For simplicity, we will only consider continuous-time systems with the goal of minimizing the $\mathcal{H}_2$ norm. However, we remark that the results of this paper can be easily extended to discrete-time systems and for the $\mathcal{H}_\infty$ norm.%for other norms such as the $\mathcal{H}_\infty$ norm.
	
		 It is immediate to verify that the optimization problem $\mathcal{P}_K$ is non-convex in $K$ in its present form, as the cost function and the requirement that $(A+BK)$ is Hurwitz are non-convex in general. Additional effort is thus needed to make this problem tractable and solvable with standard optimization techniques.

	Similar to \cite[Chapter 10]{boyd1994linear}, but with the addition of the structural constraint $\mathcal{S}$, problem $\mathcal{P}_K$ can be written as follows:
%\begin{subequations}\label{E:H2norm}
%    \begin{align}
%        \min_{X} \quad & \text{Trace}((C+DK)X(C+DK)^\mathsf{T}) \nonumber \\
%        \text{s.t.} \quad & (A+BK)X+X(A+BK)^\mathsf{T}+HH^\mathsf{T} \prec 0, \label{E:H2norm_eq1} \\
%        &  X \succ 0, K \in \mathcal{S}.
%    \end{align}
%\end{subequations}
%% where $X$ is such that
%%\begin{equation}
%%\label{eq:Lyap}
%%(A+BK)X+X(A+BK)^\mathsf{T}+HH^\mathsf{T}=0\,.
%%\end{equation}
%where we added the structural constraint $K \in \mathcal{S}$. We remark that \eqref{E:H2norm_eq1} shows that $X$ is the inverse of a Lyapunov matrix for the closed-loop system, which we denote as $P=X^{-1}$. Indeed, since $X \succ 0$, it is easy to verify that $V(x(t))=x(t)^\mathsf{T}Px(t)$ is a quadratic Lyapunov function, \emph{i.e.}, $V(x(t))>0$ and $\dot{V}(x(t))<0, \forall x(t) \neq 0_{n \times 1}$. Similar to \cite[Chapter 10]{boyd1994linear}, we define $Y=KX$ and a matrix $Z$ such that $Z-YX^{-1}Y^\mathsf{T}$ is positive semidefinite, or equivalently by using the Schur complement
%\begin{equation}
%\label{eq:Z}
%\begin{bmatrix}
%Z&Y\\Y^\mathsf{T}&X
%\end{bmatrix} \succeq  0\,.
%\end{equation}
%Then, inequality \eqref{E:H2norm_eq1}  is equivalent to
%\begin{equation}
%\label{eq:Lyap_X}
%AX+XA^\mathsf{T}+BY+Y^\mathsf{T}B^\mathsf{T}+HH^\mathsf{T}\prec~ 0\,.
%\end{equation}
%It is easy to conclude that problem $\mathcal{P}_{K}$ admits the following equivalent %SDP
%formulation, where $Y, Z$ and the inverse of the Lyapunov matrix, $X=P^{-1}$, are decision variables.
\begin{alignat}{3}
\medmuskip=-2mu
\thinmuskip=-2mu
\thickmuskip=-2mu
\nulldelimiterspace=-1pt
\scriptspace=0pt
%&\textbf{Problem }&&\mathcal{P}_{\text{X,Y,Z}}:~~ \nonumber\\
&\minimize_{X,Y,Z}\hspace{0.18cm}&&~\text{Trace}\left(CXC^\mathsf{T}+DYC^\mathsf{T}+CY^\mathsf{T}D^\mathsf{T}+DZD^\mathsf{T}\right)\nonumber \\
&	\st &&~\begin{bmatrix}
Z&Y\\Y^\mathsf{T}&X
\end{bmatrix} \succeq  0 ,\quad X \succ 0\,, \label{eq:Z} \\
&~&& \text{$AX$\hspace{-0.01cm}$+$$XA^\mathsf{T}$\hspace{-0.01cm}$+$$BY$\hspace{-0.01cm}$+$$Y^\mathsf{T}B^\mathsf{T}$\hspace{-0.01cm}$+$$HH^\mathsf{T}\prec~ 0$}\label{eq:Lyap_X}\,,\\
&~&& YX^{-1} \in \mathcal{S}\,.\nonumber
\end{alignat}
%	The static structured $\mathcal{H}_2$-norm minimization problem in continuous time, denoted as $P_{\text{SC}}$, is the following:
%	\begin{alignat*}{3}
%	P_{\text{SC}}:\quad &\min_{P\succeq 0, K} &&\text{Tr}((Q+K^\mathsf{T}RK)P^{-1})\\
%	~~~&\text{s.t.}&&(A-BK)^\mathsf{T}P+P(A-BK)+HH^\mathsf{T}\preceq 0\\
%	~~~&~&&P\succeq 0 ,\quad K \in \mathcal {S}\,,
%	\end{alignat*}
%where $\mathcal{S}=\text{Sparse}$ and $S \in \{0,1\}^{m \times n}$ is a binary matrix describing the information structure.  The problem can be reformulated as
%	\begin{alignat*}{3}
%	P_{\text{SC}}:\quad &\min_{X\succeq 0, Y} &&\text{Tr}(QX)+\text{Tr}(RY)\\
%	&\text{s.t.}&&(AX-BY)+(AX-BY)^\mathsf{T}+HH^\mathsf{T} \preceq 0\\
%	&~&&X\succeq 0 ,\quad YX^{-1} \in \mathcal{S}\,.
%	\end{alignat*}
By pre-and post-multiplying (\ref{eq:Lyap_X}) by $P=X^{-1}$, we derive that $V(x)=x^\mathsf{T}Px$ is a Lyapunov function for the closed-loop system. The solution to the original problem $\mathcal{P}_K$ is recovered as $K=YX^{-1}$. Without the structural constraint $\mathcal{S}$, the above reformulation will be a convex semidefinite program solvable with efficient optimization techniques. The primary source of non-convexity is the nonlinear constraint $YX^{-1} \in \mathcal{S}$.

\subsection{A class of convex restrictions}
Our underlying idea is to simplify the nonlinear constraint $YX^{-1} \in \mathcal{S}$ by requiring instead that $Y$ and $X$ have certain distinct sparsity patterns. In other words, we look for distributed controllers $K \in \mathcal{S}$ by restricting our search over structured matrix factors. We will thus study the following convex optimization problem:

	\begin{empheq}[box=\fbox]{alignat*=3}
	\medmuskip=2mu
\thinmuskip=2mu
\thickmuskip=2mu
	&\textbf{Problem }&&\mathcal{P}_{T,R}:~~ \nonumber\\
	&\minimize_{X,Y,Z} &&~\text{Tr}\left(CXC^\mathsf{T}+DYC^\mathsf{T}+CY^\mathsf{T}D^\mathsf{T}+DZD^\mathsf{T}\right)\nonumber\\
 &\st && ~(\ref{eq:Z}),~  (\ref{eq:Lyap_X}),~ Y \in \text{Sparse}(T), ~ X \in \text{Sparse}(R)\,,
	\end{empheq}
where $T \in \{0,1\}^{m \times n}, R \in \{0,1\}^{n \times n}$ are binary matrices to be designed. For the rest of the paper, we will assume that $R$ is symmetric with $R\geq I_n$ without loss of generality. Indeed, it is required in (\ref{eq:Z}) that $X\succ 0$. This implies  that $e_i^\mathsf{T}Xe_i=X_{ii}>0$ for each $i$. Therefore, the structure of $X$ must be symmetric and the entries of the diagonal of $X$ must be strictly positive.

Problem $\mathcal{P}_{T,R}$ is convex by construction and solvable via existing solvers. Our main goal is then to answer two fundamental questions:
\begin{enumerate}
\item under which conditions do the solutions of $\mathcal{P}_{T,R}$ recover feedback controllers $K = YX^{-1} \in \mathcal{S}$?
\item under which conditions is $\mathcal{P}_{T,R}$ feasible?
 \end{enumerate}% We will answer both questions by providing \emph{necessary} and sufficient conditions, whereas, to the best of the author's knowledge, previous work only focused on identifying sufficient conditions.

% We will see in the next section that in order to obtain feasible convex formulations of $\mathcal{P}_K$ it is sufficient to limit restrict to those matrices $a$

\section{Feasible Convex Restrictions Based on Separable Lyapunov Functions}
\label{se:allfeasibleconvex}
In this section we address the two questions raised above, by providing conditions for $\mathcal{P}_{T,R}$ to be a feasible restriction of $\mathcal{P}_K$.
% yields a (non-zero) distributed controller $K \in \mathcal{S}$.
 %We start by identifying all convex restrictions in Subsection~\ref{sub:sparsity}, which are possibly infeasible. Among them, we sort out the feasible ones in Subsection~\ref{sub:separable_Lyapunov}.
\subsection{Sparsity invariance for convex restrictions}
\label{sub:sparsity}

% As we have observed, $\mathcal{P}_K$ is subject to a nonlinear subspace constraint
 %of the form $YX^{-1} \in \mathcal{S}$.
  %It is challenging in general to exactly translate the sparsity requirement on $K$ into a set of convex constraints on $Y$ and $X$.
 % A trivial set of sufficient convex constraints is to require that $X$ is diagonal and that $Y \in \mathcal{S}$ as considered in \cite{geromel1994decentralized, zheng2017convex, palacios2012discrete, rubio2013static} . This assumption requires the existence of a diagonal Lyapunov function $V(x) = x^\tr P x$, where $P = X^{-1}$ for the closed-loop system. The restrictive requirement may, however, result in poor performance or even infeasibility.
%To alleviate this issue, \cite{darivianakis2018high} proposed introducing a new variable $V$ such that $V+V^\mathsf{T}-P \succeq 0$ and $K=YV^{-1}$. However, the restriction is still that $V$ must be diagonal to preserve the sparsity of $K$.
 %It is then important to characterize less restrictive sparsities for the Lyapunov matrix $P=X^{-1}$ and $Y$ which are sufficient for $K=YX^{-1} \in \mathcal{S}$.
Our approach is to characterize the set of  binary matrices $T$ and $R=R^\mathsf{T} \geq I_n$ such that  %w %In general Our  structural approach  to convex restrictions is based on designing separate
%sparsity subspace constraints for $X$ and $Y$ so that the desired sparsity pattern for the state feedback
%matrices is obtained. More specifically, we aim at designing binary matrices $T$ and $R$ such that
\begin{align}
	\medmuskip=2mu
\thinmuskip=2mu
\thickmuskip=2mu
\label{eq:sparsity_invariance}
Y \in \text{Sparse}(T) \text{ and }X \in \text{Sparse}(R)\, \Rightarrow YX^{-1} \in \mathcal{S} .
\end{align}
 We  refer to the property (\ref{eq:sparsity_invariance}) as \emph{sparsity invariance}.  In order to address question $1)$ stated at the end of Section~\ref{se:preliminaries}, we give a full characterization of  sparsity invariance in the following theorem. Its proof is reported in the Appendix.%The theorem that follows identifies all the binary matrices such that $\mathcal{P}_{T,R}$ is a convex restriction of $\mathcal{P}_K$. %By finding binary matrices $T$ and $R$ that satisfy (\ref{eq:sparsity_invariance}), the nonlinear sparsity constraint $YX^{-1}\in \mathcal{S}$ can be
%substituted by the linear sparsity constraints $Y \in \text{Sparse}(T)$ and $X \in \text{Sparse}(R)$ and the convex program $\mathcal{P}_{T,R}$, whenever it is feasible, will recover a state feedback controller with the desired sparsity.% $\mathcal{P}_K$.

%Our key observation is that the problem of choosing sparsities for the Lyapunov matrix as per (\ref{eq:sparsity_invariance}) is equivalent to that of determining appropriate sparsities for some affine functions of the Youla parameter in the dynamic state feedback control problems considered in our companion paper \cite{furieri2018convex}. Indeed, despite the fairly different control frameworks and techniques utilized in \cite{furieri2018convex}, the sparsity invariance condition (\ref{eq:sparsity_invariance}) on the binary matrices $T$ and $R$ is equivalent.

\begin{theorem}
\label{th:sparsity_invariance}
Let $T \in \{0,1\}^{m \times n}$ and $R \in \{0,1\}^{n \times n}$ be symmetric with $R\geq I_n$. Consider the following statements.
\begin{enumerate}
\item Sparsity invariance as per (\ref{eq:sparsity_invariance}) holds.
\item $T\leq S$ and $TR^{n-1} \leq S$.
%\item $T \leq S$ and there exists $\overline{R}\geq R$ such that $\overline{R}^2=\overline{R}$ and $T\overline{R}\leq S$.
%\item Problem $\mathcal{P}_{T,R}$ is a convex restriction of $\mathcal{P}_K$.
\item Problem $\mathcal{P}_{T,R^{n-1}}$ is a convex restriction of $\mathcal{P}_K$.
 \end{enumerate}
 Then $1) \Leftrightarrow 2) \Rightarrow 3)$. %Furthermore, $1)$ implies that $3) \Leftrightarrow 4)$.
\end{theorem}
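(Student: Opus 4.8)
My plan is to prove the chain in the order $2)\Rightarrow 1)$, then $1)\Rightarrow 2)$, then $2)\Rightarrow 3)$, using two elementary ingredients throughout: the rule $\text{Struct}(MN)\le \text{Struct}(M)\,\text{Struct}(N)$ for sparsity patterns of products, and the structural description of $R^{n-1}$ recalled in Section~\ref{se:preliminaries} — since $R=R^\tr\ge I_n$, the binary matrix $R^{n-1}$ is the adjacency matrix of the transitive closure of $\mathcal{G}(R)$, i.e.\ it has a block of all ones on each connected component of $\mathcal{G}(R)$ and zeros between distinct components. For $2)\Rightarrow 1)$, let $Y\in\text{Sparse}(T)$ and let $X\in\text{Sparse}(R)$ be invertible. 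Since $R_{ab}=0$ whenever $a,b$ lie in different connected components of $\mathcal{G}(R)$, the same holds for $X$, hence for $X^{-1}$, so $\text{Struct}(X^{-1})\le R^{n-1}$. Therefore $\text{Struct}(YX^{-1})\le \text{Struct}(Y)\,\text{Struct}(X^{-1})\le TR^{n-1}\le S$, i.e.\ $YX^{-1}\in\mathcal{S}$, which is exactly sparsity invariance.

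For $1)\Rightarrow 2)$, the inclusion $T\le S$ follows by taking $X=I_n\in\text{Sparse}(R)$ (valid because $R\ge I_n$) and $Y=T$ as a $0/1$ matrix: then $YX^{-1}=T$, and sparsity invariance forces $T\in\mathcal{S}$, i.e.\ $T\le S$. For the inclusion $TR^{n-1}\le S$ it suffices to show that for each pair $(i,j)$ with $(TR^{n-1})_{ij}=1$ there exist admissible $Y,X$ with $(YX^{-1})_{ij}\ne 0$, since sparsity invariance applied to that single pair then yields $S_{ij}=1$. Given such $(i,j)$, pick $k$ with $T_{ik}=1$ and with $k,j$ in the same connected component of $\mathcal{G}(R)$, and set $Y=e_ie_k^\tr\in\text{Sparse}(T)$, so that $(YX^{-1})_{ij}=(X^{-1})_{kj}$; it then remains to produce $X\succ 0$, $X\in\text{Sparse}(R)$, with $(X^{-1})_{kj}\ne 0$. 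If $k=j$ any $X\succ 0$ works. If $k\ne j$, fix a simple path $k=v_0,v_1,\dots,v_\ell=j$ in $\mathcal{G}(R)$, let $M=M^\tr$ be the $0/1$ matrix supported exactly on the edges of this path, and put $X=I_n+\varepsilon M$; for small $\varepsilon>0$ this is positive definite and lies in $\text{Sparse}(R)$. Expanding $X^{-1}=\sum_{r\ge 0}(-\varepsilon)^r M^r$ and using that in a path graph the unique walk of length $\le\ell$ from $v_0$ to $v_\ell$ is the direct one of length $\ell$ (so $(M^r)_{kj}=0$ for $r<\ell$ and $(M^\ell)_{kj}=1$) gives $(X^{-1})_{kj}=(-\varepsilon)^\ell\big(1+O(\varepsilon)\big)\ne 0$ for $\varepsilon$ small.

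For $2)\Rightarrow 3)$, apply the already-proven direction $2)\Rightarrow 1)$ with $R$ replaced by $\hat R:=R^{n-1}$: one has $\hat R=\hat R^\tr\ge I_n$, and since the transitive closure is idempotent, $\hat R^{n-1}=R^{n-1}$, so condition $2)$ for $\hat R$ reads ``$T\le S$ and $TR^{n-1}\le S$'', which is the hypothesis; hence sparsity invariance holds for the pair $(T,R^{n-1})$. Now let $(X,Y,Z)$ be any feasible point of $\mathcal{P}_{T,R^{n-1}}$. Constraint~(\ref{eq:Z}) gives $X\succ 0$, so $K:=YX^{-1}$ is well defined and, by sparsity invariance, $K\in\mathcal{S}$; substituting $Y=KX$ into~(\ref{eq:Lyap_X}) yields $(A+BK)X+X(A+BK)^\tr+HH^\tr\prec 0$ with $X\succ 0$, so $A+BK$ is Hurwitz; and the Schur complement of~(\ref{eq:Z}) gives $Z\succeq YX^{-1}Y^\tr=KXK^\tr$, whence the objective of $\mathcal{P}_{T,R^{n-1}}$ at $(X,Y,Z)$ is at least $\text{Tr}\big((C+DK)X(C+DK)^\tr\big)$, which in turn dominates the $\mathcal{H}_2$ cost of $K$ in $\mathcal{P}_K$ (compare $X$ with the closed-loop controllability Gramian, using that $A+BK$ is Hurwitz). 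Thus every feasible point of the convex program $\mathcal{P}_{T,R^{n-1}}$ yields a feasible point of $\mathcal{P}_K$ with no larger true cost, i.e.\ $\mathcal{P}_{T,R^{n-1}}$ is a convex restriction of $\mathcal{P}_K$.

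The main obstacle is the second half of $1)\Rightarrow 2)$: one must guarantee that the pattern $R^{n-1}$ is genuinely \emph{realized} by $\text{Struct}(X^{-1})$ for some admissible $X$, which forces the explicit construction of a positive definite $X$ with the prescribed sparsity whose inverse has no accidental zero at a prescribed entry — this is where the path/perturbation argument and the walk count that keeps the leading $\varepsilon^\ell$ coefficient nonzero do the real work. The other two implications are bookkeeping with sparsity patterns together with the standard Lyapunov and $\mathcal{H}_2$ LMI facts.
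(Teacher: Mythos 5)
Your proof is correct, and for the central implication $1)\Rightarrow 2)$ it takes a genuinely different route from the paper. The paper proves $1)\Rightarrow 2)$ by contrapositive via two global realization lemmas: Lemma~\ref{le:grows_fullest}(2) constructs, by an iterated Sherman--Morrison update, a single invertible $X\in\text{Sparse}(R)$ whose inverse attains the \emph{full} pattern $\text{Struct}(X^{-1})=R^{n-1}$, and Lemma~\ref{le:T} then builds $Y\in\text{Sparse}(T)$ with $\text{Struct}(YX^{-1})=TR^{n-1}$ exactly. You instead work entrywise: for each $(i,j)$ with $(TR^{n-1})_{ij}=1$ you produce a rank-one $Y=e_ie_k^{\mathsf T}$ and a path perturbation $X=I_n+\varepsilon M$ whose inverse has a provably nonzero $(k,j)$ entry via the leading $(-\varepsilon)^{\ell}$ walk-count term. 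Your construction is more elementary and self-contained (it avoids the genericity bookkeeping in the paper's two lemmas, and positive definiteness of your witness comes for free), at the cost of not establishing the stronger exact-realization facts that the paper reuses elsewhere. For $2)\Rightarrow 1)$ you argue via the block-diagonal (connected-component) structure of any $X\in\text{Sparse}(R)$ rather than the paper's Cayley--Hamilton argument; both are sound and essentially equivalent. For the last implication you route through $2)\Rightarrow 3)$ by applying sparsity invariance to the idempotent pattern $\hat R=R^{n-1}$ and then spell out the standard LMI-to-$\mathcal{H}_2$ domination, which is more detailed than the paper's one-line appeal to the equivalence of the reformulation with $\mathcal{P}_K$; the extra detail is harmless and arguably clarifying.
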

%\preprintswitch{A condensed proof containing all the insights is reported in the Appendix. The detailed proof is reported in an extended version of this paper~\cite{E-COMPANION}.}{}
%Theorem~\ref{th:sparsity_invariance} characterizes all convex restrictions of $\mathcal{P}_K$ based on restricting the patterns of $Y$ and $X$.
%First, Theorem~\ref{th:sparsity_invariance} characterizes all the choices of $T$ and $R$ for sparsity invariance, whereas previous work \cite{geromel1994decentralized, zheng2017convex, palacios2012discrete, rubio2013static} focused on the specific choice $T=S$ and $R=I_n$.

We remark that equivalence of $1)$ and $2)$ of Theorem~\ref{th:sparsity_invariance} provides a complete characterization of all admissible sparsities for the matrix factors $Y$ and $X$  to ensure $K \in \mathcal{S}$, whereas the previous works \cite{geromel1994decentralized, zheng2017convex, rubio2013static, conte2012distributed, han2017hierarchical}  only considered a trivial case where $X$ is (block-)diagonal and $Y \in \mathcal{S}$.

%We further note that 
%Now, notice that $R^{n-1}=(R^{n-1})^{n-1}$ by Cayley-Hamilton's theorem and the fact that $R \geq I_n$. It follows that if $TR^{n-1}\leq S$, then $T(R^{n-1})^{n-1}\leq S$. Hence, if $\mathcal{P}_{T,R}$ is a convex restriction of $\mathcal{P}_{T,R^{n-1}}$, then $\mathcal{P}_{T,R^{n-1}}$ is a convex restriction
%Now, suppose that $T$ and $R$ satisfy the condition $TR^{n-1}\leq S$, so that $\mathcal{P}_{T,R}$ is a convex restriction of $\mathcal{P}_K$ as per Theorem~\ref{th:sparsity_invariance}. 

In addition, for each $T$ and $R$ as per (\ref{eq:sparsity_invariance}), it is always preferable to  solve the convex restriction $\mathcal{P}_{T,R^{n-1}}$ instead of $\mathcal{P}_{T,R}$. Indeed, notice that if $TR^{n-1} \leq S$, then $T(R^{n-1})^{n-1} \leq S$ as a consequence of Cayley-Hamilton's theorem and the fact that $R\geq I_n$. Equivalently, when $T$ and $R$ satisfy sparsity invariance (\ref{eq:sparsity_invariance}), so do $T$ and $R^{n-1}$, and both $\mathcal{P}_{T,R}$ and $\mathcal{P}_{T,R^{n-1}}$ are convex restrictions of $\mathcal{P}_K$. Since requiring  $X \in \text{Sparse}(R')$ for some $R' < R^{n-1}$ is never convenient in terms of performance due to $\text{Sparse}(R') \subset \text{Sparse}(R^{n-1})$, we will mainly focus on the convex restriction $\mathcal{P}_{T,R^{n-1}}$ for the rest of the paper.
% Then, the solution of $\mathcal{P}_K$ recovered by solving $\mathcal{P}_{T,R'}$ for any $R' < R^{n-1}$ will be worse than that obtained by solving $\mathcal{P}_{T,R^{n-1}}$ due to the constraint $X \in \text{Sparse}(R')$ and the fact that . For this reason, we only consider $\mathcal{P}_{T,R^{n-1}}$ for the rest of the paper.
% In section~\ref{se:optimized} we  propose a procedure to design $R^{n-1}$ in an optimized way for a given $T$.  %Last, we observe that  $TR^{n-1} \leq S$ is a sufficient  condition for $\mathcal{P}_{T,R}$ to be a convex restriction of $\mathcal{P}_K$ in general. However,  %for the rest of the paper we will only consider the convex program $\mathcal{P}_{T,R^{n-1}}$, since it is a convex restriction under sparsity invariance and it is never appealing to solve $\mathcal{P}_{T,R}$ in terms of the performance of the recovered solution because $R\leq R^{n-1}$.%  we  it is always convenient to solve sparsity invariance is sufficient for $\mathcal{P}_{T,R^{n-1}}$ to be a convex restriction thus yielding conditions for $\mathcal{P}_{T,R}$ to be a convex restriction of $\mathcal{P}_K$.

%Furthermore,  Theorem~\ref{th:sparsity_invariance} clarifies that, since the solution space of $\mathcal{P}_{T,R}$ is a subspace of that of $\mathcal{P}_{T,R^{n-1}}$ when sparsity invariance holds, it is always appealing to solve $\mathcal{P}_{T,R^{n-1}}$, instead of $\mathcal{P}_{T,R}$, in terms of the performance of the recovered solution to $\mathcal{P}_K$.  Hence, for the rest of the paper we will only consider the convex program $\mathcal{P}_{T,R^{n-1}}$.

We proceed with addressing question $2)$ stated at the end of Section~\ref{se:preliminaries} about the feasibility of $\mathcal{P}_{T,R^{n-1}}$. It turns out that the feasibility of $\mathcal{P}_{T,R^{n-1}}$  is closely related to the existence of a quadratic Lyapunov function for the closed-loop system, which is \emph{separable} in the sense defined below.%  However, among all the convex restrictions based on solving  the convex program $\mathcal{P}_{T,R}$, some might be extremely conservative or not even be feasible.

\begin{definition}[Separable Lyapunov functions]
   \emph{Consider a linear system $\dot{x}(t) = A x(t)$, where $x(t) \in \mathbb{R}^{n}$. A quadratic function $V(x) = x^{\tr}Px$ that satisfies $P \succ 0, A^\tr P + P A \prec 0$ is a Lyapunov function for  the system. A quadratic Lyapunov function is separable if there exists a permutation matrix $\Pi$ such that
    \begin{equation} \label{eq:separableLy}
        \Pi P \Pi^\mathsf{T}=\text{blkdiag}\left( {P}_1, \cdots,~{P}_r\right),
    \end{equation}
    where $r\geq 1$, $v_i \in \mathbb{N}$, $P_i\succ 0$ has dimensions $v_i \times v_i$ for every $i$ and $\sum_{i=1}^r v_i = n$. More precisely, upon denoting the permuted state $\Pi x$ as
$
    \begin{bmatrix}z_1^\mathsf{T},\cdots,z_r^\mathsf{T}\end{bmatrix}^{\mathsf{T}}=\Pi x,
$
with $z_i \in \mathbb{R}^{v_i},$ $i\in \mathbb{N}_{[1,r]}$, we have
$
V(x) = \sum_{i=1}^r z_i^{\tr} P_i z_i.
$}
\end{definition}

 An extreme case is $v_i = 1$ for all $i$ and $r = n$, where we have a Lyapunov function $V(x) = x^{\tr}Px$ with a diagonal $P$ which is separable into $n$ addends. The other extreme case is $r=1$ and $v_1=n$, where $V(x)$ is not separable into multiple addends. An intermediate case is that of \cite{conte2012distributed,zheng2017convex, han2017hierarchical}, where $P$ is forced to be block-diagonal and the dimensions of each block matches that of a corresponding subsystem. Instead, our notion generalizes the cases above to all separable quadratic Lyapunov functions. Additionally, we do not restrict ourselves to the case of interconnected subsystems.
 
%\preprintswitch{}{Last, we recall that stability of positive systems is equivalent to the existence of diagonal Lyapunov functions~\cite{rantzer2015scalable}. For general linear systems, however, the existence of separable Lyapunov functions is a stronger condition than stability.}
%

%Before characterizing feasibility of $\mathcal{P}_{T,R}$,  we observe the following.
%\begin{proposition}
%\label{pr:never_desirable}
%Fix any $T \leq S$ and a symmetric $R \in \{0,1\}^{n \times n}$ such that $R\geq I_n$.  Problem $\mathcal{P}_{T,R}$ is a convex restriction of $\mathcal{P}_K$ if and only if $\mathcal{P}_{T,R^{n-1}}$ is a convex restriction of $\mathcal{P}_K$.
%\end{proposition}

% are now ready to present our  result on feasibility of  about \emph{all} feasible convex restrictions of $\mathcal{P}_K$ based on a sparse Lyapunov matrix, which are worthy of consideration in terms of performance.
%Here, we show that the existence of a separable Lyapunov function is a necessary condition for the feasibility of problem $\mathcal{P}_{T,R}$, as summarized in the following theorem.

\subsection{Separable Lyapunov functions for feasible convex restrictions}
Here, we characterize the feasibility of our convex restrictions of $\mathcal{P}_K$ in terms of Lyapunov theory.
%feasibility of our class of feasible convex restrictions of $\mathcal{P}_K$ in terms of Lyapunov theory.% which are obtained by restricting a quadratic Lyapunov function to be separable into $r$ components.
%\begin{theorem}
%\label{th:feasible_restriction}
%Let $T \in \{0,1\}^{m \times n}$ and $R \in \{0,1\}^{n \times n}$ be symmetric with  $R\geq I_n$. Problem $\mathcal{P}_{T,R^{n-1}}$ is a \emph{feasible convex restriction} of problem $\mathcal{P}_{K}$ if the following two conditions are met:
%\begin{enumerate}
%\item the binary matrices $T$ and $R$ are chosen such that $T\leq S$ and $TR^{n-1} \leq S$,
%\item there exists a Lyapunov function $V(x)=x^{\tr}Px$ with $P\in \text{Sparse}(R^{n-1})$, separable into as many addends as the connected components of $\mathcal{G}(R)$,  such that $\dot{V}(x(t))<-x(t)^\mathsf{T}PHH^\mathsf{T}Px(t)$  over the trajectories of the closed-loop system (\ref{eq:Closedloop}) controlled with  $K=YP$ and $Y \in \text{Sparse}(T)$.
%\end{enumerate}
%Furthermore, $2)$ is necessary for $\mathcal{P}_{T,R^{n-1}}$ to be feasible.
%\end{theorem}
\begin{theorem}
\label{th:feasible_restriction}
Let $T \in \{0,1\}^{m \times n}$ and $R \in \{0,1\}^{n \times n}$ be symmetric with  $R\geq I_n$. The following two statements are equivalent.
\begin{enumerate}
\item Problem $\mathcal{P}_{T,R^{n-1}}$ is feasible.
\item There exist  $Y \in \text{Sparse}(T)$  and $P \in \text{Sparse}(R^{n-1})$ with $K=YP$ satisfying
\begin{equation}
    \label{eq:Lyap_P}
(A+BK)^\mathsf{T}P+P(A+BK)+PHH^\mathsf{T}P \prec 0\,.
\end{equation}
The function $V(x)=x^{\mathsf{T}}Px$ is a Lyapunov function for the closed-loop system (\ref{eq:Closedloop}) which is separable into $r$ addends, where $r$  is the number of connected components of $\mathcal{G}(R)$.
%\item There exist  $Y \in \text{Sparse}(T)$  and $P \in \text{Sparse}(R^{n-1})$ satisfying
%\begin{equation}
%    \label{eq:Lyap_P}
%(A+BK)^\mathsf{T}P+P(A+BK)+PHH^\mathsf{T}P \prec 0\,,
%\end{equation}
% with $K=YP$, such that $V(x)=x^{\mathsf{T}}Px$ is a Lyapunov function for the closed-loop system (\ref{eq:Closedloop}) which is separable into $r$ addends, where $r$  is the number of connected components of $\mathcal{G}(R)$.
\end{enumerate}
\end{theorem}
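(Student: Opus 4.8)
The plan is to prove both implications by the classical congruence transformation $P = X^{-1}$ that passes between the ``$X$-form'' and the ``$P$-form'' of the Lyapunov inequality, the only genuinely structural ingredient being that the subspace $\text{Sparse}(R^{n-1})$ is closed under matrix inversion. First I would eliminate the slack variable $Z$: since $X \succ 0$ is imposed, a Schur-complement argument shows that constraint~(\ref{eq:Z}) is satisfiable (for instance with $Z = YX^{-1}Y^{\mathsf{T}}$) if and only if there exist $X \in \text{Sparse}(R^{n-1})$ with $X \succ 0$ and $Y \in \text{Sparse}(T)$ satisfying the Lyapunov LMI~(\ref{eq:Lyap_X}). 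Hence feasibility of $\mathcal{P}_{T,R^{n-1}}$ is equivalent to solvability of~(\ref{eq:Lyap_X}) over these two sparsity subspaces.

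Next I would record the structural lemma that lets the congruence respect sparsity. Since $R = R^{\mathsf{T}} \geq I_n$, the graph $\mathcal{G}(R^{n-1})$ is the transitive closure of $\mathcal{G}(R)$, which as recalled in Section~\ref{se:preliminaries} is a disjoint union of $r$ cliques, $r$ being the number of connected components of $\mathcal{G}(R)$; thus there is a permutation matrix $\Pi$ for which every $M \in \text{Sparse}(R^{n-1})$ has $\Pi M \Pi^{\mathsf{T}} = \text{blkdiag}(M_1,\dots,M_r)$ with blocks of fixed sizes $v_1,\dots,v_r$. Because the inverse of a positive definite block-diagonal matrix is block-diagonal with the same block pattern, $M \in \text{Sparse}(R^{n-1})$ with $M \succ 0$ is equivalent to $M^{-1} \in \text{Sparse}(R^{n-1})$ with $M^{-1} \succ 0$; moreover any such $M \succ 0$ makes $x^{\mathsf{T}} M x$ separable into exactly $r$ addends.

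For $1)\Rightarrow 2)$ I would take a feasible point $(X,Y,Z)$ of $\mathcal{P}_{T,R^{n-1}}$, set $P := X^{-1}$ (so $P \in \text{Sparse}(R^{n-1})$ by the lemma) and $K := YP = YX^{-1}$ with $Y \in \text{Sparse}(T)$, then pre- and post-multiply~(\ref{eq:Lyap_X}) by $P$; using $YP = K$ and $PY^{\mathsf{T}} = K^{\mathsf{T}}$ this is precisely~(\ref{eq:Lyap_P}), and dropping the term $PHH^{\mathsf{T}}P \succeq 0$ gives $(A+BK)^{\mathsf{T}}P + P(A+BK) \prec 0$, so $V(x) = x^{\mathsf{T}}Px$ is a Lyapunov function, separable into $r$ addends by the lemma. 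For $2)\Rightarrow 1)$ I would start from $Y \in \text{Sparse}(T)$ and $P \in \text{Sparse}(R^{n-1})$ with $P \succ 0$, $K = YP$, satisfying~(\ref{eq:Lyap_P}); put $X := P^{-1} \in \text{Sparse}(R^{n-1})$, note $KX = YPP^{-1} = Y \in \text{Sparse}(T)$, and pre- and post-multiply~(\ref{eq:Lyap_P}) by $X$ to recover~(\ref{eq:Lyap_X}) for $(X,Y)$; finally take $Z := YX^{-1}Y^{\mathsf{T}}$ so that~(\ref{eq:Z}) holds, exhibiting a feasible point of $\mathcal{P}_{T,R^{n-1}}$.

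I do not anticipate a real obstacle. The only substantive point is the inversion-closure of $\text{Sparse}(R^{n-1})$, which follows immediately from the clique structure of the transitive closure already established in the preliminaries; the rest is the routine Schur-complement and congruence bookkeeping familiar from the unstructured $\mathcal{H}_2$ LMI reformulation, the one thing to watch being to keep track of which matrix factor inherits the sparsity of $T$ and which inherits the block-diagonal sparsity of $R^{n-1}$.
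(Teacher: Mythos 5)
Your proposal is correct and follows essentially the same route as the paper: eliminate $Z$ via a Schur complement, pass between~(\ref{eq:Lyap_X}) and~(\ref{eq:Lyap_P}) by the congruence $P=X^{-1}$, and obtain separability from the permutation of $R^{n-1}$ into a block-diagonal union of cliques (the paper's Lemma~\ref{le:symmetry_necessity}). The only cosmetic difference is how inversion-closure of $\text{Sparse}(R^{n-1})$ is justified: you derive it directly from the block-diagonal clique structure (inverse of a block-diagonal matrix is block-diagonal), whereas the paper invokes the Cayley--Hamilton-based first statement of Lemma~\ref{le:grows_fullest} together with $(R^{n-1})^{n-1}=R^{n-1}$; both are valid under the symmetry hypothesis on $R$.
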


The proof of Theorem~\ref{th:feasible_restriction} relies on the following lemma.
 \begin{lemma}
 \label{le:symmetry_necessity}
Given a symmetric $R \in \{0,1\}^{n \times n}$ with $R\geq I_n$, there exists a permutation matrix $\Pi \in \{0,1\}^{n \times n}$ such that %matrix $D=\Pi R^{n-1}\Pi^\mathsf{T}$  can be expressed as
 %\begin{equation*}
 $\Pi R^{n-1}\Pi^\mathsf{T}=\text{blkdiag}\left( 1_{v_1 \times v_1}, ~1_{v_2 \times v_2},\cdots, ~1_{v_r \times v_r}\right)\,,$
 %\end{equation*}
 where $r$ % \in \mathbb{N}_{[1,n]} IS A NEW NOTATION. Probably we can just leave it
 is the number of connected components of graph $\mathcal{G}(R)$ and $v_i$ is the number of nodes in %belonging to
 the $i$-th connected component for $i \in \mathbb{N}_{[1,r]}$.
 \end{lemma}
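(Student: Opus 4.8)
The plan is to reduce the statement to the standard fact, already recalled in Section~\ref{se:preliminaries}, that $\mathcal{G}(R^{n-1})$ is the transitive closure of the undirected graph $\mathcal{G}(R)$, combined with the observation that the transitive closure of an undirected graph is a disjoint union of cliques, one per connected component (see \cite{biggs1993algebraic}). First I would fix notation: let $C_1,\dots,C_r \subseteq \mathbb{N}_{[1,n]}$ be the vertex sets of the connected components of $\mathcal{G}(R)$, so that they partition $\mathbb{N}_{[1,n]}$, and set $v_i := |C_i|$, whence $\sum_{i=1}^r v_i = n$.

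The key combinatorial step is to characterize the entries of the Boolean power $R^{n-1}$ (recall the convention $R^{n-1} = \text{Struct}(R^{n-1})$): I claim $(R^{n-1})_{ij} = 1$ if and only if $i$ and $j$ lie in the same component $C_k$. For the ``only if'' direction, $(R^{n-1})_{ij}=1$ means there is a walk of length $n-1$ from $i$ to $j$ in $\mathcal{G}(R)$, and every walk stays within one connected component. For the ``if'' direction, if $i,j \in C_k$ then there is a simple path between them of length at most $n-1$; since $R \geq I_n$ every vertex carries a self-loop, so this path can be padded with self-loop steps to a walk of length exactly $n-1$, giving $(R^{n-1})_{ij}=1$. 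The case $i=j$ is immediate from $R \geq I_n$.

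Finally I would exhibit the permutation. Order the indices so that all elements of $C_1$ come first, then all of $C_2$, and so on, and let $\Pi \in \{0,1\}^{n\times n}$ be the corresponding permutation matrix, with associated bijection $\pi$. Then $(\Pi R^{n-1}\Pi^\mathsf{T})_{ab} = (R^{n-1})_{\pi(a)\pi(b)}$, which by the characterization above equals $1$ exactly when $\pi(a)$ and $\pi(b)$ belong to the same $C_k$, i.e.\ exactly when $a$ and $b$ fall in the same block of $v_k$ consecutive indices. Hence $\Pi R^{n-1}\Pi^\mathsf{T} = \text{blkdiag}\left(1_{v_1\times v_1},\dots,1_{v_r\times v_r}\right)$, as claimed. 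I do not anticipate a genuine obstacle: the only point requiring care is the ``walk of length exactly $n-1$'' bookkeeping in the Boolean-power characterization — which is precisely where the hypothesis $R \geq I_n$ enters, together with the bound $n-1$ on the length of a simple path in an $n$-vertex graph — and everything else is a direct translation of the connected-component/transitive-closure facts cited earlier.
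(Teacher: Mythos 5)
Your proposal is correct and follows essentially the same route as the paper: both reduce the claim to the fact that $\mathcal{G}(R^{n-1})$ is the transitive closure of $\mathcal{G}(R)$ (a disjoint union of cliques, one per connected component) and then relabel vertices component-by-component via a permutation matrix. The only difference is that you spell out the walk-counting argument (padding simple paths with self-loops to length exactly $n-1$) that the paper simply cites from the preliminaries and \cite{biggs1993algebraic}.
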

 \begin{proof}
 \emph{Let $\mathcal{G}(R)$ be the graph having $R=R^\mathsf{T} \geq I_n$ as its adjacency matrix. It is well known that for every graph $\mathcal{G}(R')$ isomorphic to $\mathcal{G}(R)$ a permutation matrix $\Pi$ such that $R'=\Pi R\Pi^\mathsf{T}$ exists  \cite{biggs1993algebraic}. Since $\mathcal{G}(R^{n-1})$ is isomorphic to a graph consisting of separate complete subgraphs,  then a permutation matrix such that $\Pi R^{n-1}\Pi^\mathsf{T}=\text{blkdiag}(1_{v_{1} \times v_1},1_{v_2 \times v_2},\cdots,1_{v_{r} \times v_r})$ exists, where $v_i$ is the number of nodes belonging to the $i$-th connected  component of $\mathcal{G}(R)$ for each $i \in \mathbb{N}_{[1,r]}$ and $r$ is the number of connected components of $\mathcal{G}(R)$.}
 \end{proof}

%\textcolor{blue}{Note to self: add paragraph explaining importance of these Lemmas intuitively.} Now, we are ready to present the proof of Proposition~\ref{pr:separate_Lyapunov}.
Now, we are ready to  present the proof of Theorem~\ref{th:feasible_restriction}.
\begin{proof}
%We first prove that for any $T \in \{0,1\}^{m \times n}$ and symmetric $R\in \{0,1\}^{n \times n}$ with $R \geq I_n$ we have that $\mathcal{P}_{T,R^{n-1}}$ is feasible if and only if condition $2)$ holds.
 \emph{$1) \Rightarrow 2)$ Since $\mathcal{P}_{T,R^{n-1}}$ is feasible, there exist $X \succ 0$ in $\text{Sparse}(R^{n-1})$ and $Y \in \text{Sparse}(T)$ such that \eqref{eq:Lyap_X} holds. Upon defining  $P=X^{-1} \succ 0$ and $K = YP$, \eqref{eq:Lyap_X} can be written into (\ref{eq:Lyap_P}) by pre-and post-multiplying by $P$. From $(\ref{eq:Closedloop})$ and $(\ref{eq:Lyap_P})$ we derive that $V(x(t))=x(t)^\mathsf{T}Px(t)$ is a Lyapunov function for the closed-loop system. %  with $\dot{V}(x(t))<-x(t)^\mathsf{T}PHH^{\mathsf{T}}Px(t)\leq 0$ for all $x(t) \neq 0$. 
  %Now consider the quadratic function $V(x(t))=x(t)^\mathsf{T}Px(t)$ defined over the trajectories of the closed-loop system.  From $(\ref{eq:Closedloop})$ and $(\ref{eq:Lyap_P})$ we derive that $\dot{V}(x(t))<-x(t)^\mathsf{T}PHH^{\mathsf{T}}Px(t)\leq 0$ for all $x(t) \neq 0$.
 % Hence, $V(x)$ is a Lyapunov function for the closed-loop system.   
  The rest is to reveal this Lyapunov function is separable.  Since $X \in \text{Sparse}(R^{n-1})$, we have that $P \in \text{Sparse}((R^{n-1})^{n-1})=\text{Sparse}(R^{n-1})$ from the first statement of Lemma~\ref{le:grows_fullest} in the Appendix. 
  %in the Appendix. By Cayley-Hamilton and $R\geq I_n$, it follows that $P \in \text{Sparse}(R^{n-1})$. 
  Since $R$ is symmetric, $R\geq I_n$ and $P \in \text{Sparse}(R^{n-1})$, it follows from Lemma~\ref{le:symmetry_necessity} that there exists a permutation matrix $\Pi$ such that $\Pi P\Pi^\mathsf{T}$ satisfies~\eqref{eq:separableLy}, indicating that the Lyapunov function $V(x)$ is separable into $r$ addends, where $r$ is the number of connected components of $\mathcal{G}(R)$.}

    \emph{$2) \Rightarrow 1)$  %Since $V(x)$ is separable as defined in \eqref{eq:separableLy}, %then $V(x)=z(x(t))^\mathsf{T} \tilde{P} z(x(t))$ where $\tilde{P}=\text{blkdiag}(\tilde{P}_1,\cdots,\tilde{P}_r)$. Then
%we have by definition that $P=\Pi^\mathsf{T}\tilde{P}\Pi \in \text{Sparse}(R^{n-1})$.
%Since the closed-loop system reads as $\dot{x}(t)=(A+BYP)^\mathsf{T}x(t)$ and $\dot{V}(x(t))<-x(t)^\mathsf{T}PHH^\mathsf{T}Px(t)$, we have
%Since $V(x(t))>0$ for every $x(t) \neq 0$ then $P$ is positive definite. Since $\dot{V}(x(t))=-x(t)^\mathsf{T}PHH^\mathsf{T}Px(t)$ for all $x(t)$ solution to $\dot{x(t)}=(A+BYP)x(t)$ for some $Y \in \text{Sparse}(T)$ and any $x(0)$, then we can easily derive
%\begin{equation}
%\label{eq:lyapYP}
%(A+BYP)^\mathsf{T}P+P(A+BYP)+PHH^\mathsf{T}P~ \prec~0\,.
%\end{equation}
Define $X=P^{-1}$. Since $P \in \text{Sparse}(R^{n-1})$, we have that $X \in \text{Sparse}((R^{n-1})^{n-1})=\text{Sparse}(R^{n-1})$ from the first statement of Lemma~\ref{le:grows_fullest}. %By Cayley-Hamilton and $R\geq I_n$, it follows that $X \in \text{Sparse}(R^{n-1})$. 
 Pre-and post-multiplying (\ref{eq:Lyap_P})  by $X$ we obtain that (\ref{eq:Lyap_X}) is solved with $X=P^{-1}$ and $Y$ as per (\ref{eq:Lyap_P}). Since a matrix $Z$ such that $Z-YX^{-1}Y^\mathsf{T} \succeq 0$ exists for any fixed $Y$, $X$, then problem $\mathcal{P}_{T,R^{n-1}}$ is feasible.}

%www.cis.upenn.edu/~jean/schur-comp.pdf

%Last, we have proven in Theorem~\ref{th:sparsity_invariance} that  $\mathcal{P}_{T,R^{n-1}}$ is a convex restriction of $\mathcal{P}_K$ whenever $T\leq S$ and $TR^{n-1}\leq S$. This concludes the proof.
\end{proof}

%  Finally, we have proven that  $\mathcal{P}_{T,R^{n-1}}$ is feasible if and only if  there exists a specific Lyapunov function as per Proposition~\ref{pr:separate_Lyapunov} which is separable according to (\ref{eq:V_separated}) . The statement follows.

 We remark that Theorem~\ref{th:sparsity_invariance} and Theorem~\ref{th:feasible_restriction} offer new insight into the core challenges of distributed control. First, we established the theoretical boundaries of all approaches based on the general idea of sparsity invariance (\ref{eq:sparsity_invariance}), by showing that every convex restriction $\mathcal{P}_{T,R}$ of problem $\mathcal{P}_K$ based on (\ref{eq:sparsity_invariance}) is necessarily subject to $T\leq S$ and $TR^{n-1}\leq S$. Second, we built a direct control-theoretical interpretation of feasibility for these convex restrictions through existence of a separable quadratic Lyapunov function as per Theorem~\ref{th:feasible_restriction}, whereas approaches based on nonlinear and polynomial optimization \cite{lin2013design,wang2018convex} might be difficult to interpret.
 
 %Problem $\mathcal{P}_{T,R^{n-1}}$ is a convex restriction of $\mathcal{P}_{K}$ if $T\leq S$ and $TR^{n-1} \leq S$. 
 
%We highlight that the basic idea in~\cite{geromel1994decentralized, zheng2017convex, palacios2012discrete, rubio2013static} is a specific instance of Theorem~\ref{th:feasible_restriction}, which corresponds to $T =S$ and $R = I_n$ by assuming the existence of a \emph{diagonal (fully separable) quadratic Lyapunov function} for the closed-loop system.   %fall under this choice for $T$ and $R$.
%Indeed, if $X$ is a diagonal matrix, the product $YX^{-1}$ lies in $\mathcal{S}$ by construction. Also, the Lyapunov matrix $P=X^{-1}$ is diagonal and the corresponding quadratic Lyapunov function can be written as $V(x(t))=\sum_{i=1}^nx_i(t)^\mathsf{T}P_{i}x_i(t)$.
%A completely separable Lyapunov function might not exist \cite{zheng2017convex} or it could yield very conservative solutions to problem $\mathcal{P}_K$.
The importance of these new insights is illustrated by a simple example, which shows that requiring $Y \in \text{Sparse}(S)$ and the Lyapunov matrix $P=X^{-1}$ to be diagonal as proposed in \cite{geromel1994decentralized, rubio2013static} fails to compute a feasible controller for  general systems. Instead, appropriately loosening the separability requirement on the Lyapunov function to $r<n$ addends can restore feasibility with good performance, as predicted by Theorem~\ref{th:sparsity_invariance} and Theorem~\ref{th:feasible_restriction}.  %Notice that the approach we propose The next example shows that loosening the separability requirements on the Lyapunov function can yield performing solutions to $\mathcal{P}_K$, whereas the diagonal approach  fails to compute a feasible controller.

\begin{example}[Restoring Feasibility]
\label{ex:num}
\emph{Consider an unstable three dimensional  continuous-time linear system (\ref{eq:DynamicsGlobal}) with
\begin{equation*}
A=\begin{bmatrix}
2&1&5\\0&-1 &1\\-1&1&0.5
\end{bmatrix}, ~ B=\begin{bmatrix}
1&-1&0\\0&0&-1\\0&0&1
\end{bmatrix}, ~ H=I_3\,.
\end{equation*}
The performance signal (\ref{eq:performance_signal}) is defined  with $C=\begin{bmatrix}
I_{3}&0_{3 \times 3}
\end{bmatrix}^\mathsf{T}$ and $D=\begin{bmatrix}
0_{3 \times 3}&I_3
\end{bmatrix}^\mathsf{T}$. We consider problem $\mathcal{P}_K$, where $\mathcal{S}=\text{Sparse}(S)$ is chosen with%problem $\mathcal{P}_K$ with the structural constraint  $K \in \mathcal{S}$, where $\mathcal{S}=\text{Sparse}(S)$ is chosen with
\begin{equation*}
 S=\begin{bmatrix}
1&1&0\\1&1&1\\0&1&1
\end{bmatrix}\,.
\end{equation*}
Note that this system is not divided into interconnected subsystems, unless we interpret scalar states as subsystems. In this case, the works \cite{conte2012distributed, han2017hierarchical, zheng2017convex} suggest forcing the Lyapunov matrix to be diagonal as per \cite{geromel1994decentralized, rubio2013static}. Following these previous approaches, we first consider the convex program $\mathcal{P}_{S,I_3}$ (where we choose $T=S$ and $R$ to be diagonal), which is a convex restriction of $\mathcal{P}_K$ according to Theorem~\ref{th:sparsity_invariance}. We cast and solve this  convex program using SeDuMi \cite{sturm1999using} and YALMIP~\cite{YALMIP}. However, we verify that no feasible solution is found for the considered instance.}

 \emph{We then use our sparsity invariance approach as follows: let $T<S$ and $R$ be chosen as
\begin{equation*}
T=\begin{bmatrix}
1&1&0\\1&1&1\\0&0&1
\end{bmatrix}\,, \quad R=\begin{bmatrix}
1&1&0\\1&1&0\\0&0&1
\end{bmatrix}\,.
\end{equation*}
It can be easily verified that $TR^{n-1} <S$. Hence, $T$ and $R$ satisfy sparsity invariance as per Theorem~$\ref{th:sparsity_invariance}$ and $\mathcal{P}_{T,R^{n-1}}$ is a convex restriction of $\mathcal{P}_K$. Solving $\mathcal{P}_{T,R^{n-1}}$ with SeDuMi \cite{sturm1999using} and YALMIP~\cite{YALMIP} yields the following structured stabilizing controller $K$ with the corresponding Lyapunov matrix $P$ (rounded to the second decimal digit)
\begin{equation*}
	\medmuskip=0mu
\thinmuskip=0mu
\thickmuskip=0mu
K=\begin{bmatrix}
-4.29&3.38&0\\-0.82&1.73&-0.47\\0&0&-8.30
\end{bmatrix}\,,~~P=\begin{bmatrix}
0.50&-0.40&0\\-0.40&0.89&0\\0&0&6.50
\end{bmatrix}\,.
\end{equation*}
 The achieved $\mathcal{H}_2$ norm for the closed-loop system is $5.74$. For comparison, the optimal centralized solution yields an $\mathcal{H}_2$ norm of $3.38$.
 %Despite restricting $K$ to have the sparsity of $T<S$,
 Our generalized convex restriction approach reveals that the closed-loop system admits a Lyapunov function which is separable in two components, whereas a fully separable Lyapunov function cannot be found.}
 %In the next example, we see that even when a diagonal (or block-diagonal, where the blocks must have the dimensions of the subsystems) Lyapunov function exists, the optimized design of sparsities we proposed in (\ref{eq:procedure}) can yield an improvement, thanks to the loosening of the separability constraints on the Lyapunov function.
\end{example}

\section{Optimized Lyapunov Sparsities}
\label{se:optimized}
%As outlined, previous work \cite{geromel1994decentralized, zheng2017convex, palacios2012discrete, rubio2013static}  considered a particular case of our Theorem~\ref{th:feasible_restriction} corresponding to the choice $T=S$ and $R=I_n$. Indeed, this choice guarantees that $TR^{n-1} \leq S$ by requiring existence of a Lyapunov function that is fully separable into $r=n$ components. It was observed that forcing the Lyapunov function to be fully separable can be very conservative \cite{zheng2017convex} and we showed that it could even lead to infeasibility in Example~\ref{ex:num}.

Theorem~\ref{th:sparsity_invariance} identifies all the convex restrictions of $\mathcal{P}_K$ that are based on the sparsity invariance idea (\ref{eq:sparsity_invariance}).
%While Theorem~\ref{th:sparsity_invariance} identifies all the admissible sparsities for the matrix factors to ensure sparsity invariance (\ref{eq:sparsity_invariance}), it does not suggest a procedure to compute favourable ones. %In this section, we provide an efficient algorithm for computing optimized sparsities by validating the following intuition: minimizing the degree of separability forced on the Lyapunov function for the closed-loop system is key to improving performance of the recovered solutions of $\mathcal{P}_K$.
Among these, one may be interested in finding the convex restriction $\mathcal{P}_{T,R^{n-1}}$  which yields the best performing feasible solution. To this end, it is clear from Theorem~\ref{th:sparsity_invariance} that one could simply solve $\mathcal{P}_{T,R^{n-1}}$ for each $T\leq S$ and $R$ such that $TR^{n-1}\leq S$, then select the best result. However, this trivial approach may not be tractable in general, as one would need to solve a convex program for each admissible $T$ and $R$. Even if a certain $T\leq S$ is fixed for simplicity, one would need to solve a number of convex programs that is exponential in $n^2$ (one for each admissible $R$ such that $TR^{n-1} \leq S$).

To mitigate the challenge above, we suggest a computationally efficient algorithm that directly computes an optimized choice for $R$ given a fixed $T\leq S$. Our suggested approach is based on designing the symmetric binary matrix $R^\star_T$ that yields the best performing convex restriction $\mathcal{P}_{T,R^\star_T}$ of $\mathcal{P}_K$ among all the symmetric binary matrices $R$ satisfying:
\begin{equation}
\label{eq:restriction}
T{R}^{n-1}\leq T\,.
\end{equation}
Such an $R^\star_T$ can be computed with the following algorithm that has polynomial complexity $O(mn^2)$. 

%The optimal $R$ for a fixed $T\leq S$ 
\emph{Step 1:}  Fix $T\leq S$. For every $j,k \in \mathbb{N}_{[1,n]}$, set $(R_T)_{jk}$ to $1$ or $0$ as follows.
\begin{equation}
\label{eq:procedure1}
(R_T)_{jk}=\begin{cases}
0& \text{if }\, \exists i \in \mathbb{N}_{[1,m]}\; \text{s.t.}\;T_{ik}=0,~T_{ij}=1,\\
1& \text{otherwise.}
\end{cases}\\
%&\qquad \forall j,k \in \mathbb{N}_{[1,n]} \nonumber\,.
\end{equation}
In general, this $R_T$ might be non-symmetric. We restore symmetry with an additional step.

\emph{Step 2:} For every $j,k \in \mathbb{N}_{[1,n]}$, set $(R^\star_T)_{jk}$ to $1$ or $0$ as follows.
\begin{equation}
\label{eq:procedure2}
(R^\star_T)_{jk} = \begin{cases} 1& \text{ if } (R_T)_{jk}=(R_{T})_{kj}=1,\\
0& \text{otherwise.}
\end{cases}
\end{equation}

In Proposition~\ref{pr:procedure} below, whose proof is reported in the Appendix, we show that $R^\star_T$ computed according to (\ref{eq:procedure1}), (\ref{eq:procedure2}) 
%is such that $R^\star_T=(R^\star_T)^{n-1}$ and 
 yields an optimized convex restriction for any fixed {$T \leq S$}. We provide additional insight on favourable sparsities for the Lyapunov matrix, by proving that performance is maximized because our algorithm (\ref{eq:procedure1}), (\ref{eq:procedure2}) minimizes the  degree of separability forced on the Lyapunov function.
\begin{proposition}
\label{pr:procedure}
Given a binary matrix $T\leq S$, let us restrict our attention to the set of all symmetric binary matrices $R=R^{n-1} \geq I_n$ such that $TR^{n-1}\leq T$. Consider the following statements.%, consider the following statements.
\begin{enumerate}
\item Matrix $R^\star_T$ is computed according to the two-step procedure (\ref{eq:procedure1}), (\ref{eq:procedure2}).
\item The graph $\mathcal{G}(R^\star_T)$ has the minimal number of connected components, thus minimizing the degree of separability forced on a Lyapunov function for the closed-loop system.% among all the graphs $\mathcal{G}(R)=\mathcal{G}(R^{n-1})$ such that $R\geq I_n$, and $TR^{n-1} \leq T$. %Among all the problems $\mathcal{P}_{T,R^{n-1}}$ such that $R$ is symmetric, $R\geq I_n$ and $TR^{n-1} \leq T$,  there is none  being a convex restriction of $\mathcal{P}_{K}$ more performing than  $\mathcal{P}_{T,R^\star_T}$.
%$\mathcal{P}_{T,R^{n-1}}$ is not a convex restriction of $\mathcal{P}_{K}$ more performing than  $\mathcal{P}_{T,R^\star_T}$.
\item $\mathcal{P}_{T,R^\star_T}$ is the best performing convex restriction of $\mathcal{P}_K$ among the problems $\mathcal{P}_{T,R^{n-1}}$.
\end{enumerate}
Then $1) \Leftrightarrow 2) \Rightarrow 3)$.
\end{proposition}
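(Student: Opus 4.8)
The plan is to translate the whole statement into the language of set partitions. Every symmetric binary $R$ with $R=R^{n-1}\ge I_n$ is the adjacency matrix (with ones on the diagonal) of a disjoint union of cliques, hence corresponds bijectively to a partition $\pi(R)$ of $\mathbb{N}_{[1,n]}$; conversely, write $R_\pi$ for the symmetric binary matrix having ones exactly within the blocks of a partition $\pi$. Under this correspondence the number of connected components of $\mathcal{G}(R)$ equals the number of blocks $|\pi(R)|$, and $R\le R'$ entrywise if and only if $\pi(R)$ refines $\pi(R')$. Since here $R^{n-1}=R$, the admissibility constraint $TR^{n-1}\le T$ reads simply $TR\le T$.

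The key step is to characterize the admissible $R$. Writing the Boolean product $(TR)_{ij}=\bigvee_{k}T_{ik}R_{kj}$, the inequality $TR\le T$ is equivalent to: for all $i,j$, if $T_{ij}=0$ then $T_{ik}=0$ for every $k$ in the same block of $\pi(R)$ as $j$. Because ``$k$ and $j$ lie in the same block'' is a symmetric relation, this is in turn equivalent to: the columns $T_{\cdot j}$ and $T_{\cdot k}$ coincide whenever $j$ and $k$ lie in the same block of $\pi(R)$. Hence the admissible matrices $R$ are exactly those for which $\pi(R)$ refines $\pi_T$, where $\pi_T$ is the partition of $\mathbb{N}_{[1,n]}$ whose blocks are the maximal groups of identical columns of $T$. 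In particular $\pi_T$ is itself admissible and is the coarsest admissible partition, so $R_{\pi_T}$ is simultaneously the entrywise-largest admissible $R$ and the unique admissible $R$ whose graph attains the minimum possible number of connected components; by Theorem~\ref{th:feasible_restriction} this number is precisely the degree of separability forced on a closed-loop Lyapunov function, which gives the second half of statement $2)$.

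Next I would verify that the two-step procedure outputs exactly $R_{\pi_T}$, establishing $1)\Leftrightarrow 2)$. Step \eqref{eq:procedure1} sets $(R_T)_{jk}=1$ iff there is no $i$ with $T_{ik}=0$ and $T_{ij}=1$, i.e.\ iff $T_{ij}=1\Rightarrow T_{ik}=1$ for all $i$, i.e.\ iff $T_{\cdot j}\le T_{\cdot k}$ entrywise; Step \eqref{eq:procedure2} then sets $(R^\star_T)_{jk}=1$ iff $T_{\cdot j}\le T_{\cdot k}$ and $T_{\cdot k}\le T_{\cdot j}$, i.e.\ iff $T_{\cdot j}=T_{\cdot k}$. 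Thus $R^\star_T=R_{\pi_T}$; it is symmetric, satisfies $R^\star_T\ge I_n$ and $(R^\star_T)^{n-1}=R^\star_T$, and $TR^\star_T\le T$, so it is admissible (and, since $T\le S$ and $TR^\star_T\le T\le S$, $\mathcal{P}_{T,R^\star_T}$ is a genuine convex restriction of $\mathcal{P}_K$ by Theorem~\ref{th:sparsity_invariance}). Conversely, any admissible $R$ whose graph has the minimum number of connected components has $\pi(R)$ refining $\pi_T$ with $|\pi(R)|=|\pi_T|$, which forces $\pi(R)=\pi_T$ and hence $R=R^\star_T$; this closes the equivalence.

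Finally, $2)\Rightarrow 3)$ is a feasible-set-inclusion argument. For admissible $R\le R'$ we have $\text{Sparse}(R)\subseteq\text{Sparse}(R')$ (with $R=R^{n-1}$ and $R'=R'^{n-1}$), so the feasible region of $\mathcal{P}_{T,R^{n-1}}$ is contained in that of $\mathcal{P}_{T,R'^{n-1}}$, whence the optimal value of the latter is no larger than that of the former, under the convention that an infeasible instance has value $+\infty$. Since $R^\star_T$ dominates every admissible $R$ entrywise, $\mathcal{P}_{T,R^\star_T}$ attains the smallest optimal value among all the problems $\mathcal{P}_{T,R^{n-1}}$ under consideration, i.e.\ it is the best-performing convex restriction. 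The main obstacle is the characterization in the second paragraph: the Boolean bookkeeping converting $TR\le T$ into ``$T$ is column-constant on the blocks of $\pi(R)$'' must be done carefully, and one must check both that the admissible set is exactly $\{R:\pi(R)\text{ refines }\pi_T\}$ and that $\pi_T$ itself is admissible, so that the coarsest candidate is genuinely attainable; everything else is routine verification or a standard monotonicity argument.
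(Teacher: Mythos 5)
Your proof is correct, and it reaches the same structural conclusion as the paper — that $R^\star_T$ is the entrywise-largest admissible matrix, which simultaneously minimizes the number of connected components and maximizes the feasible set of the convex restriction — but it gets there by a genuinely cleaner route. The paper works directly with matrix inequalities: it shows $T(R^\star_T)^{n-1}\leq T$ and $(R^\star_T)^{n-1}=R^\star_T$, argues that any admissible $R$ satisfies $R^{n-1}\leq R^\star_T$ via the symmetry of $R^{n-1}$, phrases statement $2)$ as an explicit combinatorial optimization over binary matrices, and proves the converse direction $2)\Rightarrow 1)$ by a three-case contrapositive. You instead observe that the matrices under consideration (symmetric, reflexive, idempotent under powers) are exactly equivalence relations, characterize admissibility $TR\leq T$ as ``the columns of $T$ are constant on each block of $\pi(R)$,'' and unwind (\ref{eq:procedure1})--(\ref{eq:procedure2}) to show that $R^\star_T$ is precisely the ``identical columns of $T$'' relation, i.e.\ the coarsest admissible partition $\pi_T$. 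This buys you the equivalence $1)\Leftrightarrow 2)$ in one stroke — uniqueness of the minimizer follows because a proper refinement strictly increases the block count — and it makes explicit \emph{what} the procedure computes, which the paper leaves implicit. Your $2)\Rightarrow 3)$ argument (feasible-set inclusion from $\text{Sparse}(R^{n-1})\subseteq\text{Sparse}(R^\star_T)$, with infeasibility read as value $+\infty$) coincides with the paper's. The one bookkeeping point you flagged — converting $TR\leq T$ into column-constancy on blocks — does go through exactly as you sketched, using the symmetry of the same-block relation, so there is no gap.
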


%As a consequence of Theorem~\ref{th:feasible_restriction} and Proposition~\ref{pr:procedure}, if $R\geq I_n$ with $TR^{n-1} \leq T$ exists such that $\mathcal{P}_{T,R^{n-1}}$ is feasible, then problem $\mathcal{P}_{T,R^\star_T}$ is also feasible. This is because a  Lyapunov function which is separable into strictly more than $r^\star$ addends is also separable in exactly $r^\star$ addends, where $r^\star$ is the number of connected components of $\mathcal{G}(R^\star_T)$. Hence, a Lyapunov function which must only be separated in $r^\star$ addends has a higher chance of being found.

%problem $\mathcal{P}_{T,R^\star_T}$ also has the highest chances of being feasible among the problems $\mathcal{P}_{T,R^{n-1}}$ with  $R\geq I_n$, $TR^{n-1} \leq T$ and $R^{n-1} \neq R^\star_T$. Indeed, let $r$ and $r^\star$ be the number of connected components of $\mathcal{G}(R^{n-1})$ and $\mathcal{G}(R^\star_T)$ respectively. We have shown that $r>r^\star$. Since a Lyapunov function which is separable into $r>r^\star$ addends is also separable in $r^\star$ addends, then a Lyapunov function which must only be separable in $r^\star$ addends has a higher chance of existing.

We remark that, despite its simplicity, the algorithm (\ref{eq:procedure1}), (\ref{eq:procedure2}) guarantees that a convex restriction at least as performing as that of \cite{geromel1994decentralized, rubio2013static} is obtained by simply choosing $T=S$, due to the fact that $R^\star_T\geq I_n$ by construction. However, we show through the examples of Section~\ref{se:numerics} that it is possible to exploit insight into the specific dynamical system under investigation to obtain better performing choices for $T< S$.

\section{Network Example}
\label{se:numerics}
In this section, we present an illustrative example to validate our results on improving the performance with respect to previous approaches. All instances of problem $\mathcal{P}_{T,R^{n-1}}$  were solved using SeDuMi \cite{sturm1999using} and YALMIP~\cite{YALMIP}, on a computer equipped with a 16GB RAM and a 4.2 GHz quad-core Intel i7 processor.

 Motivated by \cite{zheng2017convex,lin2013design}, we consider an $n \times n$ mesh network of unstable nodes. We assume that each node is a second-order system coupled with  its neighbours in the mesh through a factor $\alpha >0$ as follows:
\begin{equation*}
\dot{x}_i=\begin{bmatrix}1&1\\1&2\end{bmatrix}x_i+\sum_{j \in \mathcal{N}_{\text{mesh}}(i), ~j \neq i}\alpha x_j+\begin{bmatrix}
0\\1
\end{bmatrix}(w_i+u_i)\,,
\end{equation*}
where $\mathcal{N}_{\text{mesh}}(i) \subseteq \mathbb{N}_{[1,n^2]}$ is the set of neighbours of node $i$ according to the mesh topology on the left of Figure~\ref{fig:mesh}. The global system dynamics can be written as (\ref{eq:DynamicsGlobal}), where matrix $A$ is divided into blocks $[A]_{ij}$ of dimension $2 \times 2$ such that
\begin{align*}
&[A]_{ii}=\begin{bmatrix}1&1\\1&2\end{bmatrix},\quad [A]_{ij}=\alpha I_2,\\
&\forall i\in \mathbb{N}_{[1,n^2]}, \quad \forall j \in \mathcal{N}_{\text{mesh}}(i),~j \neq i\,.
\end{align*}
Clearly, we have $B=H=I_{n^2} \otimes \begin{bmatrix}
0&1\end{bmatrix}^\mathsf{T}$, where $\otimes$ denotes the Kronecker product.

We consider a scenario where some nodes in a centralized network remain isolated from the rest of the network. % because of communication failures or external attacks.
The isolated nodes can only use the information of their neighbours in the plant graph.  We let $L \in \mathbb{N}_{[0,n^2]}$ be the number of nodes having full information about the states of all the other nodes, while the remaining $n^2-L$ isolated nodes can only measure the states of their nearest neighbour in the mesh topology (see the left side of Figure~\ref{fig:mesh} for illustration).  For example, when $L=0$ we recover the example of \cite{zheng2017convex,lin2013design}, and $L=n^2$ corresponds to a standard centralized control problem. The resulting information structure is encoded in $S \in \{0,1\}^{n^2 \times 2n^2}$.
\begin{figure}[t]
	      \centering \setlength{\belowcaptionskip}{-15pt}
	      	\includegraphics[width=0.4\textwidth]{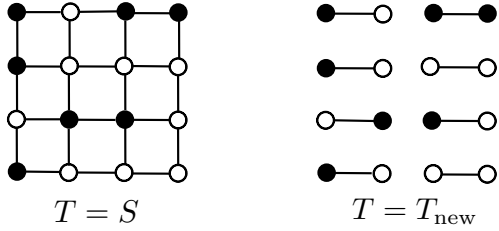}
	      		\caption{ A networked system with $16$ subsystems that are dynamically coupled according to the mesh structure on the left. Black circles represent subsystems having full state information, while  empty circles represent subsystems which  can only measure the states of their neighbours.  A case with $L=7$ is represented. The figure on the left depicts the choice  $T=S$, while the figure on the right represents the choice $T=T_{\text{new}}$ which is based on the maximal cliques contained in the mesh graph. }
	      			      		\label{fig:mesh}
	\end{figure}
	 %For a fixed $L$, the binary matrix $S$ encoding the desired sparsity for the state feedback controller matrix $K$ is as follows
	%\begin{equation*}
	%S=S_{\text{mesh}}+S_{\text{add}}\,,
	%\end{equation*}
	%where the binary matrix $S_{\text{mesh}}$ encodes the mesh structure and the matrix $S_{\text{add}}$ encodes the additional information available to $L$ randomly chosen subsystems, that is ${S_{\text{add}}}_{l,:}=1_{1,2n^2}$ if $l$ has full information and ${S_{\text{add}}}_{l,:}=0_{1,2n^2}$ otherwise.
%
% the information structure we analyze here can capture some practical scenarios which may arise in networks of dynamically coupled systems. One scenario can involve a normally centralized network, where communication is  employed to share state information with all the other agents when everything works
%
%  the scenario in which a set of agents in a centralized network becomes disconnected  from the network because of communciation failures., or some agents  information cannot communicate .
%
 The control objective is to minimize the $\mathcal{H}_2$ norm of the transfer function from disturbances to the performance signal %, chosen as the collection of distances of each agent from their target positions in the mesh and their velocities. %In other words, we aim at synthesizing a controller which  counteracts the propagation of disturbances over the interconnected agents as much as possible, so that they maintain their mesh configuration as per Figure~\ref{fig:mesh}.
%The performance signal
 (\ref{eq:performance_signal})   defined by the matrices $C=\begin{bmatrix}
I_{2n^2}&0_{n^2 \times 2n^2}^\mathsf{T}
\end{bmatrix}^{\mathsf{T}}$, $D=\begin{bmatrix}
0_{2n^2 \times n^2}^\mathsf{T}&I_{n^2}
\end{bmatrix}^{\mathsf{T}}$.% We next apply the developed techniques to this problem and analyze the results.

For the simulation, we considered a grid of $4 \times 4$ nodes. We first solved the convex restrictions of $\mathcal{P}_K$ obtained by using the approach of requiring the Lyapunov matrix to be block-diagonal \cite{conte2012distributed,zheng2017convex, han2017hierarchical}, where each of the $16$ blocks must have dimension $2 \times 2$, and letting  $Y \in \text{Sparse}(S)$. For every $L \in \mathbb{N}_{[0,16]}$, the result is shown as a red line in Figure~\ref{fig:trajectories}. It can be noticed that, despite relaxing the structural constraints on $Y$ as $L$ increases, the bottleneck in performance remains the (block) diagonal assumption on the Lyapunov matrix.
%This drawback of previous approaches becomes obvious for the centralized case $L=16$, which would need to be treated as a special case where $X$ is allowed to be full in order to recover the centralized global optimum. %As proposed in \cite{zheng2017convex}, the recovered suboptimal solution of $\mathcal{P}_{K}$, which we denote as $K_0$, can be used as the initial condition for a projected gradient descent iteration which finds the sparse local optimum of the cost function of $\mathcal{P}_K$ closest to $K_0$. We refer the reader to Algorithm~1  in \cite{zheng2017convex} for the details. The result of projected gradient descent for every $L \in \mathbb{N}_{[0,16]}$ is shown as a dashed black line. Despite allowing $Y$ to be denser as $L$ increases, the improvement of gradient descent is approximately constant for every $L$.

We then  used the sparsity invariance approach with the optimized Lyapunov sparsity  $R^\star_T$ computed as per (\ref{eq:procedure1}), (\ref{eq:procedure2}). First, we fixed $T=S$ and solved $\mathcal{P}_{S,R^\star_S}$.   We report the results for each $L$ as a green line in Figure~\ref{fig:trajectories}.
%By inspecting Figure~\ref{fig:separation}, an improvement with respect to the approach in \cite{geromel1994decentralized, zheng2017convex} is obtained as soon as we relax the separability requirement on the Lyapunov function from $16$ addends to $15$ addends,  and the improvement becomes more and more significant as the  separability decreases.
The performance improvement is consistent %improvement in performance is in accordance
with Proposition~\ref{pr:procedure}, where we linked minimized separability of the Lyapunov function with optimized performance.  % Another advantage of our approach  is that the recovered solution can be used as a more performing initial condition for the projected gradient descent algorithm. As a consequence, the closest local minimum for the original problem $\mathcal{P}_K$ improves as well with respect to \cite{geromel1994decentralized, zheng2017convex}. The results  obtained by applying projected gradient descent after solving the proposed convex restriction are shown as a dashed blue line in Figure~\ref{fig:trajectories}.

By exploiting the structure of the networked dynamical system, it is possible to choose $T<S$ to yield higher performance than the simple choice $T=S$. To validate this observation, we chose $T_{\text{new}}<S$ as follows
\begin{equation*}
T_{\text{new}}=T_{\text{cliques}}+T_{\text{add}}=I_{8 \times 8} \otimes 1_{2 \times 4} +  T_{\text{add}}\,,
\end{equation*}
where $T_{\text{add}}$ indicates the additional information available to the $L$ centralized agents.  The matrix $T_{\text{cliques}}$ was chosen to encode the maximal cliques within the mesh topology as shown on the right side of Figure \ref{fig:mesh}. The intuition into choosing $T_{\text{cliques}}$ is that it shows an efficient trade-off between reducing separability requirements on the Lyapunov function for the closed-loop system and restricting the structure of $T$.  Indeed, the blocks of $R^\star_{T_{\text{cliques}}}$ corresponding to isolated cliques can be dense while still ensuring $T_{\text{cliques}}(R^\star_{T_{\text{cliques}}})^{n-1}\leq T_{\text{cliques}}$. By using the optimized Lyapunov matrix $R^\star_{T_{\text{new}}}$, we solved problem $\mathcal{P}_{T_{\text{new}},R^\star_{T_{\text{new}}}}$. We report the result for every $L$ as a blue line in Figure~\ref{fig:trajectories}. The simulation shows that the choice  $T=T_{\text{new}}<S$ leads to improved performance for $L>3$. For $L\leq 3$, the performance improvement obtained by loosening the separability requirements on the Lyapunov function is not sufficient to compensate for the more restrictive structural constraints on $Y \in \text{Sparse}(T_{\text{new}})$.
\begin{figure}[t]
    \centering \setlength{\belowcaptionskip}{-9pt}
    \includegraphics[width=0.55\linewidth]{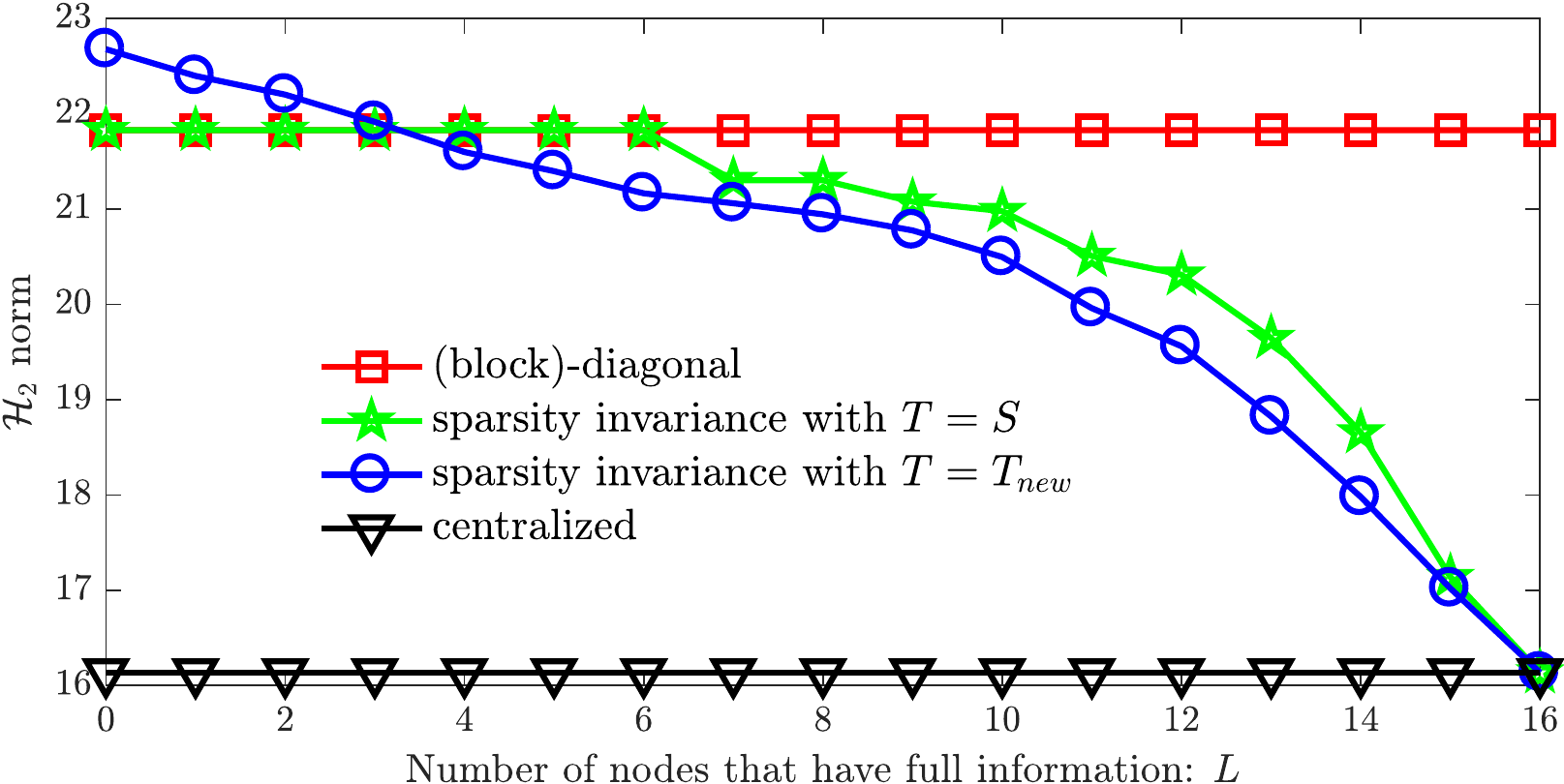}	      		
    \flushleft
    \caption{Performance comparison for the proposed sparsity invariance approach, the block-diagonal strategy, and the centralized control. The agents with full information were chosen randomly according to the following succession, where we number the agents in the mesh from left to right  starting from the top row: $14$, $15$, $3$, $11$, $2$, $5$, $9$, $16$, $8$, $13$, $7$, $1$, $12$, $6$, $10$, $4$.}
    \label{fig:trajectories}
    \vspace{-3mm}
\end{figure}

%%		\begin{figure}[thb]
%%	      \centering
%%	      	\includegraphics[width=0.5\textwidth]{separation2}
%%	      	\caption{Degree of separability forced on the quadratic Lyapunov function. The agents with full information were added in this order: 3,7,8,.... where we cont left to right and top to bottom.}
%%	      		      	      		\label{fig:separation}
%	\end{figure}

\section{Conclusions}
%We  generalized the class of  feasible convex restrictions of the static $\mathcal{H}_2$ norm minimization problem that are based on imposing a structure on the Lyapunov matrix for the closed-loop system. Consequently, we generalized previous approaches \cite{geromel1994decentralized, zheng2017convex, palacios2012discrete, rubio2013static} and improved their feasibility and performance.  Furthermore, we  suggested an efficient algorithm to identify favourable structures for the Lyapunov matrix. We linked our improvement in feasibility and performance to the loosening of the separability requirements on the quadratic Lyapunov function, and validated our results through numerical simulations.
With the aim of improving feasibility and performance of approaches based on computing a block-diagonal Lyapunov function for the closed-loop system \cite{geromel1994decentralized, zheng2017convex, rubio2013static, conte2012distributed, han2017hierarchical}, we characterized a generalized class of feasible convex restrictions of the static optimal distributed control problem   based on the concept of separable Lyapunov functions. We validated our main results through numerical examples.

Several directions remain open for future research. First, with a similar spirit to the results of \cite{rantzer2015scalable} about existence of diagonal Lyapunov functions for positive systems, our findings motivate identifying  more general classes of dynamical systems for which stability is equivalent to existence of a separable Lyapunov function.
Second, the efficient algorithm we suggested to design performing structures for the Lyapunov matrix relies on the approximation that the structure of one of the two matrix factors is fixed beforehand.  Hence, it would be interesting to develop efficient heuristics that identify high performing sparsities for both matrix factors simultaneously.
% Indeed, choosing the most performing convex restriction in the class we have identified amounts to an intractable problem, which involves an intricate trade-off between sparsity of the controller and minimum separability of the Lyapunov function which can still ensure sparsity invariance.
Last, since all separable Lyapunov matrices can be permuted to have a block-diagonal structure, it is relevant to explore the connections with \cite{sootla2017block, zheng2018scalable} for scalable synthesis of distributed controllers.
\balance
\section*{Appendix}
\subsection{Proof of Theorem~\ref{th:sparsity_invariance}}

\setcounter{lemma}{0}
\renewcommand{\thelemma}{\Alph{subsection}\arabic{lemma}}
\label{app:sparsity_invariance}
The proof relies on the following Lemmas.
%\preprintswitch{}{.}\preprintswitch{, whose detailed proofs are reported in an extended version of this paper \cite{ECOPMANION}.}{}
\begin{lemma}
\label{le:grows_fullest}
Let $R \in \{0,1\}^{n \times n}$  with $R \geq I_n$. Then, %the following two statements hold.
\begin{enumerate}
\item for any invertible $X$ in $\text{Sparse}\left(R\right)$, we have
\begin{equation*}
\text{Struct}\left(X^{-1}\right) \leq R^{n-1}\,.
\end{equation*}
\item there exists an invertible matrix $X\in \text{Sparse}(R)$ such that
\begin{equation*}
\text{Struct}\left(X^{-1}\right)=R^{n-1}\,.
\end{equation*}
\end{enumerate}
\end{lemma}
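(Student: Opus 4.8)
The statement to prove is Lemma~A1 (labelled \texttt{le:grows\_fullest}): for $R \geq I_n$ binary, (1) any invertible $X \in \text{Sparse}(R)$ has $\text{Struct}(X^{-1}) \leq R^{n-1}$, and (2) some invertible $X \in \text{Sparse}(R)$ achieves $\text{Struct}(X^{-1}) = R^{n-1}$.

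\bigskip

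The plan for part (1) is to use the adjugate/cofactor formula $X^{-1} = \frac{1}{\det X}\,\text{adj}(X)$, which shows $\text{Struct}(X^{-1}) \leq \text{Struct}(\text{adj}(X))$. Each entry $(\text{adj}(X))_{ij}$ is $\pm$ a minor of $X$, i.e. a sum of signed products of the form $\prod_{\ell \neq j} X_{\ell,\sigma(\ell)}$ over bijections $\sigma$ from $\mathbb{N}_{[1,n]}\setminus\{j\}$ to $\mathbb{N}_{[1,n]}\setminus\{i\}$. Such a product is nonzero only if every factor $X_{\ell,\sigma(\ell)}$ is nonzero, i.e. $R_{\ell,\sigma(\ell)}=1$ for all $\ell \neq j$. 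I would then argue combinatorially that the existence of such a $\sigma$, together with $R \geq I_n$, forces $R^{n-1}_{ij}=1$: indeed a nonzero term in the minor gives a permutation (completing $\sigma$ by mapping $j \mapsto i$) all of whose cycles except possibly the one through $j$ are ``present'' in $\mathcal{G}(R)$; since $R \geq I_n$, the cycle through $j$ corresponds to a path from $j$ to $i$ in $\mathcal{G}(R)$ of length at most $n-1$, so $(R^{n-1})_{ij} = 1$. Alternatively — and more cleanly — I would expand $X^{-1}$ as a Neumann-type series: writing $X = D(I - N)$ with $D$ the diagonal part (invertible since $X$ is invertible and... actually this needs care since diagonal entries of $X$ need not be nonzero). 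The safest route is to observe that since $R \geq I_n$, $R^{n-1} = R^n = \cdots$ stabilizes and $\text{Struct}(X^{-1}) \leq \text{Struct}(\text{adj} X) \leq R^{n-1}$ by the cofactor counting above; this is the route I'd write up.

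\bigskip

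For part (2), the plan is a genericity argument. Choose $X \in \text{Sparse}(R)$ with entries that are ``algebraically independent'' parameters (or random reals). Then for each pair $(i,j)$ with $(R^{n-1})_{ij} = 1$ there is at least one bijection $\sigma$ making the corresponding product $\prod_{\ell\neq j}X_{\ell,\sigma(\ell)}$ nonzero (this is exactly the reverse direction of the combinatorial claim in part (1): $(R^{n-1})_{ij}=1$ means there is a path of length $\leq n-1$ from $j$ to $i$ in $\mathcal{G}(R)$, which combined with self-loops from $R \geq I_n$ yields a full permutation witnessing the minor), so the $(i,j)$-minor is a nonzero polynomial in the entries of $X$. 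The determinant is likewise a nonzero polynomial (it contains the all-diagonal term). Since a finite product of nonzero polynomials is a nonzero polynomial, there is a choice of real values for the free entries of $X$ making $\det X$ and all the desired minors simultaneously nonzero — e.g. outside the zero set of that product, which has measure zero. For that $X$, $\text{Struct}(X^{-1}) \geq R^{n-1}$, and combined with part (1) we get equality.

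\bigskip

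The main obstacle is the combinatorial heart shared by both parts: precisely relating nonvanishing of a cofactor of $X$ to reachability in $\mathcal{G}(R)$, being careful that diagonal entries of $X$ (equivalently self-loops, guaranteed by $R \geq I_n$) are what let a short path be padded out to a full permutation of $\mathbb{N}_{[1,n]}$, and conversely that any permutation term surviving in the cofactor decomposes into a short $j$-to-$i$ walk plus fixed points / short cycles. Once this bijection-versus-path dictionary is set up cleanly, both inclusions follow; the genericity step in part (2) is then routine (nonzero polynomials do not vanish identically).
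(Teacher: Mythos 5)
Your proposal is correct, but it proves the lemma by a genuinely different route than the paper. For part (1) the paper simply invokes Cayley--Hamilton: $X^{-1}=-\lambda_0^{-1}(\lambda_1 I+\lambda_2 X+\cdots+\lambda_n X^{n-1})$, and since $R\geq I_n$ forces $\text{Struct}(X^i)\leq R^{n-1}$ for all $i\leq n-1$, the bound follows in two lines; your adjugate-plus-cycle-decomposition argument reaches the same bound with more work, but it is the argument that actually exposes the combinatorics (nonzero cofactor $\Leftrightarrow$ an $R$-compatible permutation $\Leftrightarrow$ a short walk padded by the self-loops from $R\geq I_n$), which you then reuse for part (2). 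For part (2) the paper gives an explicit iterative construction: starting from $X=I_n$ it adds rank-one updates $\alpha e_ie_j^{\mathsf T}$ for each off-diagonal $R_{ij}=1$ and tracks the growth of $\text{Struct}(X^{-1})$ via the Sherman--Morrison identity, choosing each $\alpha$ to avoid cancellation; your genericity argument (each relevant minor and the determinant are nonzero polynomials in the free entries because distinct compatible bijections contribute distinct monomials, so a point outside the measure-zero vanishing set of their product works) is cleaner and avoids the paper's somewhat fiddly bookkeeping, at the cost of being non-constructive. One small item to tidy when writing it up: in part (1) the cycle of $\pi$ through $j$ yields a directed walk from $i$ to $j$ along the edges $\ell\to\pi(\ell)$, $\ell\neq j$ (you wrote ``from $j$ to $i$''), and it is this $i$-to-$j$ walk that certifies $(R^{n-1})_{ij}=1$ under the convention $(R^{k})_{ij}=1\Leftrightarrow$ there is a length-$k$ walk $i\to\cdots\to j$; the conclusion you state is the right one, so this is only a bookkeeping correction, and it is moot wherever the lemma is applied with symmetric $R$.
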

%\preprintswitch{}{
\begin{proof}
\emph{Suppose $X \in \text{Sparse}(R)$ is invertible. By Cayley-Hamilton's theorem, $\sum_{i=0}^{n}\lambda_iX^i =0$ where $\{\lambda_i\}_{i=0}^{n}$ are the coefficients of the characteristic polynomial of $X$ and $\lambda_0=\det{X}\neq 0$. By pre-multiplying by $X^{-1}$ and rearranging the terms we obtain
\begin{equation}
\label{eq:inverse_sum}
X^{-1}=-\lambda_0^{-1}(\lambda_1 I+\lambda_2X+\lambda_3X^2+\cdots+\lambda_nX^{n-1})\,.
\end{equation}
Since $R\geq I$ we have that $R^a \geq R^b$ for every integer $a\geq b$. Hence,  $\lambda_i X^i \in \text{Sparse}\left(R^{n-1}\right)$ for every $i$ and the first statement follows by (\ref{eq:inverse_sum}).}
%\preprintswitch{ As to the second statement, the Sherman-Morrison identity \cite{sherman1950adjustment} can be exploited to develop an algorithm with termination guarantees which iteratively modifies the entries of $X_k \in \text{Sparse}(R)$ until $X^{-1}$ is exactly as dense as $R^{n-1}$. The details of this algorithm are reported in the e-companion of this paper \cite{E-COMPANION} in the interest of space.}

\emph{For the second statement, we iteratively construct $X$ starting from $X=I_n$. Let $\alpha \in \mathbb{R}$. Define $\tilde{X}=X+\alpha e_i e_j^\mathsf{T}$.  Let  $X^{-1}_{:,i} \in \mathbb{R}^{n \times 1}$  and  $X^{-1}_{j,:}\in \mathbb{R}^{1 \times n}$ be the $i$-th column  and the $j$-th row of $X^{-1}$ respectively, and let $X^{-1}_{ij}$ be the entry $(i,j)$ of $X^{-1}$. Using the Sherman-Morrison identity \cite{sherman1950adjustment}, we obtain
\begin{equation}
\label{eq:adding_to_i}
\tilde{X}^{-1}_{i,:}=X^{-1}_{i,:}-\frac{\alpha X^{-1}_{ii}}{1+\alpha X^{-1}_{ji}} X^{-1}_{j,:}\,.
\end{equation}
 From (\ref{eq:adding_to_i}), it is easy to verify that, for any $i$ and $\alpha \in \mathbb{R}$, if $X^{-1}_{ii}\neq 0$, then $\tilde{X}^{-1}_{ii} \neq 0$. It follows that by choosing $\alpha$ such that
\begin{align}
\label{eq:alpha_condition}
&\alpha X^{-1}_{ji} \neq -1 \text{ and }\alpha \left(X^{-1}_{ii} X^{-1}_{jk}-X^{-1}_{ji}X^{-1}_{ik}\right) \neq X^{-1} _{ik}\,,\nonumber\\
& \forall k\text{ subject to }X^{-1}_{jk}\text{ and }X^{-1}_{ik} \text{ are not both null}\,,
\end{align}
we obtain that
\begin{equation}
\label{eq:augment_structure}
\text{Struct}\left(\tilde{X}^{-1}_{i,:}\right)=\text{Struct}\left(X^{-1}_{i,:}\right)+\text{Struct}\left( X^{-1}_{j,:}\right)\,.
\end{equation}
% A value $\alpha$ consistent with (\ref{eq:alpha_condition}) can always be chosen, because the left-hand-side and the right-hand-side of (\ref{eq:alpha_condition}) cannot  both be $0$ at the same time.
 The condition (\ref{eq:alpha_condition}) is derived by setting the right hand side of (\ref{eq:adding_to_i}) to be different from $0$ for every $k$ such that  $X^{-1}_{ik}$ and $X^{-1}_{jk}$ are not both null. When both are null, (\ref{eq:adding_to_i}) reveals $\tilde{X}^{-1}_{ik}=0$, coherent with (\ref{eq:augment_structure}). The structural augmentation (\ref{eq:augment_structure}) is exploited in the algorithm below.
	\begin{algorithm}[H]
\label{alg:buildXinv}
  \begin{algorithmic}[1]
  \State Set $X=I_n$
  \Repeat
  \Comment{max. $(|R|-n)(n-1)$ iterations}
  \For{ each $(i,j)$ such that $i\neq j$ and $R_{ij}=1$ }
  	\State Choose $\alpha$  according to (\ref{eq:alpha_condition})
  	\State $X\leftarrow X+\alpha e_ie_j^\mathsf{T}$
  \EndFor
  \Until{$\text{Struct}(X^{-1})=R^{n-1}$}
  \State Return $X$
%  \For{{$k,r$ as in (\ref{eq:condition_lemma1})}}
%  \State Set $M=S\Delta_kZ^rS$ and $N=Z^{k+r+1}S$
%  	\For{$i,j=1 \text{ to } m$ and $l=1\text{ to } p$}
%  		\If {$M_{ij}(1-N_{ij})=1$}
%			\State Switch appropriate entries of $Z$ from $0$ to $1$ so that $S\Delta_k\leq{Z}^{k+1}$
%  		\EndIf
%  	\EndFor
%  \EndFor
  \end{algorithmic}
\end{algorithm}
The algorithm  returns a matrix $X$ such that $\text{Struct}(X^{-1}) =R^{n-1}$. Specifically, by exploiting (\ref{eq:augment_structure}) we obtain that $\text{Struct}(X^{-1}) \geq R^{s}$ at the end of the $s$-th iteration of the ``repeat-until'' cycle.}

%\emph{Base case:} $X$ is initialized to $I_n$, so $X^{-1}=I_n$. Let us fix $i,j$ such that $i \neq j$ and $R_{ij}=1$ and let us choose $\alpha$ as per (\ref{eq:alpha_condition}). Since $X^{-1}(j,j)\neq 0$, then $\tilde{X}_{ij} \neq 0$ by (\ref{eq:augment_structure}). By repeating the procedure for all other indices such that $i \neq j$ and $R_{ij}=1$, we conclude that $\text{Struct}(X^{-1}) \geq R$ at the end of first ``until-repeat'' cycle.
%
%\emph{General case:} Suppose that  $\text{Struct}(X^{-1}) \geq R^{s}$ at the end of the $s$-th ``until-repeat'' cycle.  Let us fix $i,j$ such that $i \neq j$ and $R_{ij}=1$ and let us choose $\alpha$ as per (\ref{eq:alpha_condition}). Since $R^{s} \geq R$, it follows that $X^{-1}_{ij}\neq 0$. By hypothesis and  (\ref{eq:augment_structure}) we have that $\tilde{X}^{-1}_{jl} \neq 0$ for all $j,l$ such that $R^{s}(j,l)=1$. Equivalently, the graph associated with $\tilde{X}^{-1}$ contains all the edges $i\rightarrow l$ associated with all the paths in the graph associated with $R$ of the form $i\rightarrow j \rightarrow \cdots \rightarrow l$ having length $s+1$. By repeating the procedure for every  $i,j$ such that $i \neq j$ and $R_{ij}=1$, we conclude that $\text{Struct}(X^{-1}) \geq R^{s+1}$ at the end of $s+1$-th ``until-repeat'' cycle.
%
%It follows that the algorithm returns a matrix $X$ such that $\text{Struct}\left(X^{-1}\right) \geq R^{n-1}$. Last, notice that $R^{N}\leq R^{n-1}$ for every $N > n-1$ by Cayley-Hamilton. Hence, $\text{Struct}(X^{-1})=R^{n-1}$.

\end{proof}

\begin{lemma}
\label{le:T}
Let $T \in \{0,1\}^{m \times n}$ and $R \in \{0,1\}^{n \times n}$, and  $\text{Struct}(
W)=R$. Then, there exists $Z \in \text{Sparse}(T)$ such that
$$\text{Struct}(ZW)=TR\,.$$
\end{lemma}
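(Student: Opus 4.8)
Let $T \in \{0,1\}^{m \times n}$ and $R \in \{0,1\}^{n \times n}$, and let $W$ be a real matrix with $\text{Struct}(W) = R$. Then there exists $Z \in \text{Sparse}(T)$ such that $\text{Struct}(ZW) = TR$.

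**Proof plan.**

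The plan is to construct $Z$ entrywise so that its support is exactly the pattern prescribed by $T$, and then argue that with a suitably generic choice of the nonzero entries, no cancellation occurs in the product $ZW$, so the support of $ZW$ is exactly $TR$ (recall $TR$ is by convention $\text{Struct}(TR)$, the Boolean product). First I would fix the support: for each pair $(i,k)$ with $T_{ik} = 1$, I will set $Z_{ik}$ to be a nonzero real parameter, and $Z_{ik} = 0$ whenever $T_{ik} = 0$. This guarantees $Z \in \text{Sparse}(T)$ regardless of the parameter values. Then the $(i,j)$ entry of $ZW$ is $(ZW)_{ij} = \sum_{k : T_{ik} = 1} Z_{ik} W_{kj}$. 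If for the index pair $(i,j)$ we have $(TR)_{ij} = 0$, then for every $k$ with $T_{ik} = 1$ we have $R_{kj} = 0$, hence $W_{kj} = 0$, so $(ZW)_{ij} = 0$ automatically — the "$\leq$" direction $\text{Struct}(ZW) \leq TR$ holds for any choice of $Z \in \text{Sparse}(T)$.

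The substance is the reverse inclusion: I need to choose the nonzero parameters $\{Z_{ik}\}$ so that whenever $(TR)_{ij} = 1$, i.e. there exists at least one $k$ with $T_{ik} = 1$ and $W_{kj} \neq 0$, the sum $(ZW)_{ij} = \sum_{k} Z_{ik} W_{kj}$ is nonzero. The clean way to do this is a genericity/Zariski-density argument: for each fixed pair $(i,j)$ with $(TR)_{ij} = 1$, the map $(Z_{ik})_k \mapsto \sum_k Z_{ik} W_{kj}$ is a nonzero linear functional in the free parameters of row $i$ of $Z$ (nonzero because some coefficient $W_{kj}$ is nonzero with $T_{ik}=1$), so its vanishing locus is a proper affine subspace, hence measure zero / nowhere dense. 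Taking the finite union over all such pairs $(i,j)$, the set of "bad" parameter vectors for which some required entry of $ZW$ vanishes is a finite union of proper subspaces, so its complement is nonempty (indeed dense). Any parameter choice in the complement, with additionally all $Z_{ik}$ for $T_{ik}=1$ chosen nonzero (again avoiding finitely many hyperplanes), gives a $Z$ with $\text{Struct}(Z) = T$ and $\text{Struct}(ZW) = TR$.

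The main obstacle — really the only thing to be careful about — is bookkeeping the two families of genericity conditions simultaneously: we need each prescribed entry $Z_{ik}$ ($T_{ik}=1$) to be nonzero \emph{and} each prescribed entry $(ZW)_{ij}$ ($(TR)_{ij}=1$) to be nonzero, all while the parameters range over the same finite-dimensional space. Since both conditions exclude only finite unions of proper affine subspaces of $\mathbb{R}^{|T|}$ (where $|T|$ is the number of ones in $T$), their intersection is still dense and in particular nonempty, so a valid $Z$ exists. An alternative to the density argument, if one prefers an explicit construction, is an inductive/greedy choice of the $Z_{ik}$ one at a time, at each step picking the new parameter outside the finitely many values that would cause a cancellation in any partially-formed sum — but the Zariski-density phrasing is shorter and I would present that.
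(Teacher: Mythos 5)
Your proof is correct, and it reaches the conclusion by a genuinely different (though closely related) route from the paper. The paper starts from an \emph{arbitrary} $Z \in \text{Sparse}(T)$ and runs a greedy repair: whenever $(ZW)_{ij}=0$ but $T_{ik}=R_{kj}=1$ for some $k$, it perturbs the single entry $Z_{ik}$ by a scalar $\alpha$ chosen outside finitely many bad values (those that would zero out some already-nonzero entry $(ZW)_{is}$ in the same row), and iterates until $\text{Struct}(ZW)=TR$. You instead parametrize the whole of $\text{Sparse}(T)$ at once and observe that, for each $(i,j)$ with $(TR)_{ij}=1$, the map $Z \mapsto (ZW)_{ij}$ is a nonzero linear functional of the free entries (nonzero because some $k$ has $T_{ik}=1$ and $W_{kj}\neq 0$), so the set of $Z$ violating any of the finitely many requirements is a finite union of hyperplanes, whose complement is dense and in particular nonempty. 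Your observation that $\text{Struct}(ZW)\leq TR$ holds for \emph{every} $Z\in\text{Sparse}(T)$ is the same easy direction the paper implicitly uses. The underlying mechanism --- avoiding finitely many linear conditions --- is identical in both arguments, but your one-shot Zariski-density phrasing is shorter and sidesteps the bookkeeping the paper needs to verify that each repair step does not destroy previously established nonzero entries (a point the paper handles correctly but somewhat tersely, and which is the only place its iteration could conceivably have gone wrong). The paper's version, in exchange, is explicitly algorithmic, which matches the constructive flavour of the adjacent Lemma~\ref{le:grows_fullest}. One minor remark: the lemma only requires $Z\in\text{Sparse}(T)$, not $\text{Struct}(Z)=T$, so your extra hyperplanes enforcing $Z_{ik}\neq 0$ for $T_{ik}=1$ are harmless but not needed.
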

%\preprintswitch{}{
\begin{proof}
\emph{Let $Z$ be any matrix in $\text{Sparse}(T)$. Assume that $\text{Struct}(ZW)<TR$. Then, for some $(i,j,k)$ we have that $ZW_{ij}=0$ and $T_{ik}=R_{kj}=1$. We know by hypothesis that  $W_{kj}\neq 0$. Since $\sum_{l=1}^nZ_{il}W_{lj}=0$, it is sufficient to update $Z_{ik}$ with $Z_{ik}+\alpha$ for any $\alpha \neq 0$ to guarantee that $ZW_{ij} \neq 0$. Furthermore, by choosing $\alpha \neq -\frac{ZW_{is}}{W_{ks}}$ for all $s$ such that $ZW_{is}\neq 0$, we avoid that adding $\alpha$ to $Z_{ik}$ brings $ZW_{is}$ to $0$ when $ZW_{is}\neq 0$. Hence, it is always possible to choose $k$ and $\alpha$ such that $ZW+\alpha e_ie_k^\mathsf{T}>ZW$ and $Z \in \text{Sparse}(T)$. By iterating the procedure for all $(i,j)$ such that $\text{Struct}(ZW)_{ij}<TR_{ij}$, we converge to $\text{Struct}(ZW)=TR$.}% By repeating the procedure for entries that are $0$ in $ZW$ but $1$ in $TR$, we conclude the proof. %all the  We build $Z$ iteratively as in the algorithm below.

\end{proof}

We are now ready to prove Theorem~\ref{th:sparsity_invariance}.

$1) \Rightarrow 2)$: We prove by contrapositive. First, suppose that $TR^{n-1} \not \leq S$. By the second statement of  Lemma~\ref{le:grows_fullest} it is possible to select $X\in \text{Sparse}(R)$ such that $\text{Struct}(X^{-1})=R^{n-1}$. By the latter and Lemma~\ref{le:T}, we can select $Y \in \text{Sparse}(T)$ such that $\text{Struct}\left(YX^{-1}\right)=TR^{n-1}$, or equivalently $YX^{-1} \not \in \mathcal{S}$. Next, suppose that $T\not \leq S$. Since $R\geq I_n$ by hypothesis, then $TR \not \leq S$ and $TR^{n-1} \not\leq S$. Hence, the same reasoning applies.

$2) \Rightarrow 1)$: Let $X \in \text{Sparse}(R)$ be invertible.  By Lemma~\ref{le:grows_fullest} we know that $X^{-1} \in \text{Sparse}(R^{n-1})$. Now let $Y \in \text{Sparse}(T)$. Since $TR^{n-1} \leq S$, we have $YX^{-1} \in \mathcal{S}$.

$1) \Rightarrow 3)$: If (\ref{eq:sparsity_invariance}) holds, clearly $\mathcal{P}_{T,R}$ is a restriction of the non-convex problem where $YX^{-1} \in \mathcal{S}$. The latter problem is equivalent to $\mathcal{P}_K$. Hence,  $\mathcal{P}_{T,R}$ is a restriction of $\mathcal{P}_K$. Since $TR^{n-1}\leq S$, by Cayley-Hamilton and $R\geq I_n$ we have that $(R^{n-1})^{n-1}$ is also such that $T(R^{n-1})^{n-1} \leq S$. Hence, $\mathcal{P}_{T,R^{n-1}}$ is a convex restriction of $\mathcal{P}_K$.

\subsection{Proof of Proposition~\ref{pr:procedure}}
\label{app:procedure}

$1) \Rightarrow 2)$: It is easy to verify $T(R^\star_T)^{n-1} \leq T$. Indeed, $T(R_T)^{n-1} \leq \cdots \leq  T(R_T)\leq T$ by  (\ref{eq:procedure1}) and $R^\star_T \leq R_T$ by (\ref{eq:procedure2}).  Also, $R^\star_T \geq I_n$ by construction.
 Now, consider any binary symmetric $R$ such that $R\geq I_n$ and $TR^{n-1} \leq T$. Since $R^{n-1}$ is symmetric, we have that whenever $T_{ik}=0$ and $T_{ij}=1$, then $(R^{n-1})_{jk}=(R^{n-1})_{kj}=0$. This implies that $R^{n-1} \leq R^\star_T$ by definition (\ref{eq:procedure1}), (\ref{eq:procedure2}).  It follows that $(R^\star_T)^{n-1} \leq R^\star_T$. Since $(R^\star_T)^{n-1}\geq R^\star_T$ because $R^\star_T \geq I_n$, we conclude that $(R^\star_T)^{n-1} = R^\star_T$. Now, consider the following optimization problem with binary variables:
 {
	\medmuskip=0mu
\thinmuskip=0mu
\thickmuskip=0mu
	\begin{alignat*}{3}
	&\minimize_{R \in \{0,1\}^{n \times n}} &&   ~~~r\nonumber\\
	&~\st &&~~R=R^\mathsf{T},~~ R\geq I_n,~~R=R^{n-1}, ~~TR^{n-1} \leq T\,,\\
	&~&&~~\mathcal{G}(R)\text{ has }r\text{ connected components}\,,
	\end{alignat*}}
where we aim at minimizing the number of connected components of the graphs $\mathcal{G}(R)$ under the assumptions on $R$ stated in the proposition statement. We have shown above that $R^\star_T$ is feasible for this problem. We have also shown that any other feasible $R$ is such that $R=R^{n-1}\leq R^\star_T$. It follows that $\mathcal{G}(R)$ is a subgraph of $\mathcal{G}(R^\star_T)$ for every feasible $R$. Since $\mathcal{G}(R)=\mathcal{G}(R^{n-1})$ and $\mathcal{G}(R^\star_T)$ are graphs consisting of separate complete subgraphs, it follows that either $\mathcal{G}(R)$ is equal to $\mathcal{G}(R^\star_T)$ or $\mathcal{G}(R)$ has strictly more connected components than $\mathcal{G}(R^\star_T)$. Statement $2)$ follows.

$2) \Rightarrow 1)$ We prove by contrapositive. Let $R^\star_T$ be computed as per (\ref{eq:procedure1}), (\ref{eq:procedure2}). We have proven above that $(R^\star_T)^{n-1}=R^\star_T$. Now consider the optimization problem introduced in the proof that $1) \Rightarrow 2)$. If $R$ is not feasible, then it does not satisfy the assumptions on $R$ made in the proposition statement. Hence, take any feasible $R$ for the optimization problem such that $R \neq R^\star_T$. We have three cases:
\begin{enumerate}[(i)]
\item $R<R^\star_T$,
\item $R \not \leq R^\star_T$ and $R \not \leq R_T$,
\item $R \not \leq R^\star_T$ and $R \leq R_T$.
\end{enumerate}
In case i), $\mathcal{G}(R)$ is a strict subgraph of $\mathcal{G}(R^\star_T)$. Both $\mathcal{G}(R)$ and $\mathcal{G}(R^\star_T)$ consist of complete subgraphs. This implies that $r>r^\star$, where $r$ and $r^\star$ are the number of  connected components of $\mathcal{G}(R)$ and $\mathcal{G}(R^\star_T)$ respectively.
%, or equivalently the algebraic multiplicity of the monomials $\lambda^r$ and $\lambda^{r^\star}$  in $\det{(\lambda_I-L(R))}$ and $\det{(\lambda_I-L(R^\star_T))}$ respectively.
 Hence, $R$ cannot be an optimal solution of the optimization problem. In case ii) we have that $TR^{n-1} \not \leq T$ by construction of $R_T$. Hence, $R$ cannot be feasible for the optimization problem. In case (iii), notice that there is no symmetric $R$ such that $R \not \leq R^\star_T$ and $R\leq R_T$ by construction. This contradicts the hypothesis that $R \leq R_T$.   We conclude that if $R \geq I_n$, $R=R^{n-1}$ and $TR^{n-1} \leq T$ is not chosen according to procedure (\ref{eq:procedure1}), (\ref{eq:procedure2}), then the number of connected components of $\mathcal{G}(R)$ is not minimized. %The contrapositive of the latter completes the proof.

 % since $R$ is symmetric and such that $R^\star_T \leq R \leq R_T$, it must be that $R=R^\star_T$ because $R^\star_T$ is the maximal symmetric matrix contained in $R_T$. This contradicts the hypothesis that $R \not \leq R^\star_T$.

 $1) \Rightarrow 3)$  Let $R$ be any symmetric binary matrix such that $R=R^{n-1}\geq I_n$, $TR^{n-1} \leq T$ and $R \neq R^\star_T$, where $R^\star_T$ is computed according to (\ref{eq:procedure1}), (\ref{eq:procedure2}). We have shown that $R^{n-1}< R^\star_T$. Hence, $\text{Sparse}(R^{n-1}) \subset \text{Sparse}(R^\star_T)$. Since we require $X \in \text{Sparse}(R^{n-1})$,  we conclude that it is never convenient in terms of performance and feasibility to solve $\mathcal{P}_{T,R^{n-1}}$ instead of $\mathcal{P}_{T,R^\star_T}$ as a convex restriction of $\mathcal{P}_K$.

	\bibliographystyle{IEEEtran}
	\bibliography{IEEEabrv,references}

% Generated by IEEEtran.bst, version: 1.14 (2015/08/26)
\begin{thebibliography}{10}
\providecommand{\url}[1]{#1}
\csname url@samestyle\endcsname
\providecommand{\newblock}{\relax}
\providecommand{\bibinfo}[2]{#2}
\providecommand{\BIBentrySTDinterwordspacing}{\spaceskip=0pt\relax}
\providecommand{\BIBentryALTinterwordstretchfactor}{4}
\providecommand{\BIBentryALTinterwordspacing}{\spaceskip=\fontdimen2\font plus
\BIBentryALTinterwordstretchfactor\fontdimen3\font minus
  \fontdimen4\font\relax}
\providecommand{\BIBforeignlanguage}[2]{{%
\expandafter\ifx\csname l@#1\endcsname\relax
\typeout{** WARNING: IEEEtran.bst: No hyphenation pattern has been}%
\typeout{** loaded for the language `#1'. Using the pattern for}%
\typeout{** the default language instead.}%
\else
\language=\csname l@#1\endcsname
\fi
#2}}
\providecommand{\BIBdecl}{\relax}
\BIBdecl

\bibitem{dorfler2014sparsity}
F.~D{\"o}rfler, M.~R. Jovanovi{\'c}, M.~Chertkov, and F.~Bullo,
  ``Sparsity-promoting optimal wide-area control of power networks,''
  \emph{IEEE Trans. on Pow. Syst.}, vol.~29, no.~5, pp. 2281--2291, 2014.

\bibitem{prescott2014layered}
T.~P. Prescott and A.~Papachristodoulou, ``Layered decomposition for the model
  order reduction of timescale separated biochemical reaction networks,''
  \emph{Journal of theoretical biology}, vol. 356, pp. 113--122, 2014.

\bibitem{zheng2017distributed}
Y.~Zheng, S.~E. Li, K.~Li, F.~Borrelli, and J.~K. Hedrick, ``Distributed model
  predictive control for heterogeneous vehicle platoons under unidirectional
  topologies,'' \emph{IEEE Trans. on Contr. Syst. Techn.}, vol.~25, no.~3, pp.
  899--910, 2017.

\bibitem{tsitsiklis1984complexity}
J.~N. Tsitsiklis and M.~Athans, ``On the complexity of decentralized decision
  making and detection problems,'' in \emph{IEEE Conf. on Dec. and Contr.
  (CDC)}, vol.~23, 1984, pp. 1638--1641.

\bibitem{Witsenhausen}
H.~S. Witsenhausen, ``A counterexample in stochastic optimum control,''
  \emph{SIAM Journal on Control}, vol.~6, no.~1, pp. 131--147, 1968.

\bibitem{ho1972team}
Y.-C. Ho and C.~K'Ai-Ching, ``Team decision theory and information structures
  in optimal control problems--part i,'' \emph{IEEE Trans. on Aut. Contr.},
  vol.~17, no.~1, pp. 15--22, 1972.

\bibitem{bamieh2005convex}
B.~Bamieh and P.~Voulgaris, ``A convex characterization of distributed control
  problems in spatially invariant systems with communication constraints,''
  \emph{Syst. \& Contr. Lett.}, vol.~54, no.~6, pp. 575--583, 2005.

\bibitem{rotkowitz2006characterization}
M.~Rotkowitz and S.~Lall, ``A characterization of convex problems in
  decentralized control,'' \emph{IEEE Trans. on Aut. Contr.}, vol.~51, no.~2,
  pp. 274--286, 2006.

\bibitem{rotkowitz2012nearest}
M.~C. Rotkowitz and N.~C. Martins, ``On the nearest quadratically invariant
  information constraint,'' \emph{IEEE Trans. on Aut. Contr.}, vol.~57, no.~5,
  pp. 1314--1319, 2012.

\bibitem{furieri2018robust}
L.~Furieri and M.~Kamgarpour, ``Robust distributed control beyond quadratic
  invariance,'' in \emph{IEEE Conf. on Dec. and Contr. (CDC)}, 2018, pp.
  3728--3733.

\bibitem{Alavian}
A.~Alavian and M.~C. Rotkowitz, ``{Q-parametrization and an SDP for
  $\mathcal{H}_\infty$-optimal decentralized control},'' \emph{IFAC Proceedings
  Volumes}, vol.~46, no.~27, pp. 301--308, 2013.

\bibitem{conte2012distributed}
C.~Conte, N.~Voellmy, M.~Zeilinger, M.~Morari, and C.~Jones, ``Distributed
  synthesis and control of constrained linear systems,'' in \emph{IEEE Amer.
  Contr. Conf. (ACC), 2012}, 2012, pp. 6017--6022.

\bibitem{darivianakis2018high}
G.~Darivianakis, S.~Fattahi, J.~Lygeros, and J.~Lavaei, ``High-performance
  cooperative distributed model predictive control for linear systems,'' in
  \emph{IEEE Amer. Contro. Conf. (ACC)}, 2018.

\bibitem{lin2013design}
F.~Lin, M.~Fardad, and M.~Jovanovi{\'c}, ``Design of optimal sparse feedback
  gains via the alternating direction method of multipliers,'' \emph{IEEE
  Trans. on Aut. Contr.}, vol.~58, no.~9, pp. 2426--2431, 2013.

\bibitem{SDP}
G.~Fazelnia, R.~Madani, A.~Kalbat, and J.~Lavaei, ``Convex relaxation for
  optimal distributed control problems,'' \emph{IEEE Trans. on Aut. Contr.},
  vol.~62, no.~1, pp. 206--221, 2017.

\bibitem{wang2018convex}
Y.~Wang, J.~Lopez, and M.~Sznaier, ``Convex optimization approaches to
  information structured decentralized control,'' \emph{IEEE Trans. on Aut.
  Contr.}, 2018.

\bibitem{dvijotham2015convex}
K.~Dvijotham, E.~Todorov, and M.~Fazel, ``Convex structured controller design
  in finite horizon,'' \emph{IEEE Trans. on Contr. of Netw. Syst.}, vol.~2,
  no.~1, pp. 1--10, 2015.

\bibitem{geromel1994decentralized}
J.~Geromel, J.~Bernussou, and P.~Peres, ``Decentralized control through
  parameter space optimization,'' \emph{Automatica}, vol.~30, no.~10, pp.
  1565--1578, 1994.

\bibitem{rubio2013static}
J.~Rubi{\'o}-Masseg{\'u}, J.~M. Rossell, H.~R. Karimi, and
  F.~Palacios-Quinonero, ``Static output-feedback control under information
  structure constraints,'' \emph{Automatica}, vol.~49, no.~1, pp. 313--316,
  2013.

\bibitem{zheng2017convex}
Y.~Zheng, M.~Kamgarpour, A.~Sootla, and A.~Papachristodoulou, ``Convex design
  of structured controllers using block-diagonal {Lyapunov} functions,''
  \emph{arXiv preprint arXiv:1709.00695}, 2017.

\bibitem{han2017hierarchical}
R.~Han, M.~Tucci, R.~Soloperto, A.~Martinelli, G.~Ferrari-Trecate, and J.~M.
  Guerrero, ``Hierarchical plug-and-play voltage/current controller of dc
  microgrid clusters with grid-forming/feeding converters: Line-independent
  primary stabilization and leader-based distributed secondary regulation,''
  \emph{arXiv preprint arXiv:1707.07259}, 2017.

\bibitem{polyak2013lmi}
B.~Polyak, M.~Khlebnikov, and P.~Shcherbakov, ``An {LMI} approach to structured
  sparse feedback design in linear control systems,'' in \emph{Control
  Conference (ECC), 2013 European}.\hskip 1em plus 0.5em minus 0.4em\relax
  IEEE, 2013, pp. 833--838.

\bibitem{biggs1993algebraic}
N.~Biggs, N.~L. Biggs, and E.~N. Biggs, \emph{Algebraic graph theory}.\hskip
  1em plus 0.5em minus 0.4em\relax Cambridge university press, 1993, vol.~67.

\bibitem{alavian2014stabilizing}
A.~Alavian and M.~Rotkowitz, ``Stabilizing decentralized systems with arbitrary
  information structure,'' in \emph{IEEE Conf. on Dec. and Contro. (CDC)},
  2014, pp. 4032--4038.

\bibitem{boyd1994linear}
S.~Boyd, L.~El~Ghaoui, E.~Feron, and V.~Balakrishnan, \emph{Linear matrix
  inequalities in system and control theory}.\hskip 1em plus 0.5em minus
  0.4em\relax Siam, 1994, vol.~15.

\bibitem{sturm1999using}
J.~F. Sturm, ``Using {SeDuMi} 1.02, a {MATLAB} toolbox for optimization over
  symmetric cones,'' \emph{Optimization methods and software}, vol.~11, no.
  1-4, pp. 625--653, 1999.

\bibitem{YALMIP}
J.~L{\"{o}}fberg, ``{YALMIP}: A toolbox for modeling and optimization in
  {MATLAB},'' in \emph{In Proc. of the CACSD Conf.}, Taipei, Taiwan, 2004.

\bibitem{rantzer2015scalable}
A.~Rantzer, ``Scalable control of positive systems,'' \emph{European Journal of
  Control}, vol.~24, pp. 72--80, 2015.

\bibitem{sootla2017block}
A.~Sootla, Y.~Zheng, and A.~Papachristodoulou, ``Block-diagonal solutions to
  {Lyapunov} inequalities and generalisations of diagonal dominance,'' in
  \emph{IEEE Conf. on Dec. and Contr. (CDC)}, 2017, pp. 6561--6566.

\bibitem{zheng2018scalable}
Y.~Zheng, M.~Kamgarpour, A.~Sootla, and A.~Papachristodoulou, ``Scalable
  analysis of linear networked systems via chordal decomposition,''
  \emph{Europ. Contr. Conf. (ECC)}, pp. 2260--2265, 2018.

\bibitem{sherman1950adjustment}
J.~Sherman and W.~J. Morrison, ``Adjustment of an inverse matrix corresponding
  to a change in one element of a given matrix,'' \emph{The Annals of
  Mathematical Statistics}, vol.~21, no.~1, pp. 124--127, 1950.

\end{thebibliography}
	
\end{document}